\tikzstyle{state}=[circle,draw=black,inner sep=0pt,minimum size=10pt]
\newif\ifdraft\draftfalse
\newcommand{\Red}[2]{\textrm{Red}}
\newcommand{\bisim}{\sim}
\newcommand{\states} {{\ensuremath S}}
\newcommand\dist{\mathit{Dist}}
\newcommand{\mystackrel}[2]{%
  \mathrel{\vbox{\offinterlineskip\ialign{%
    \hfil##\hfil\cr
    $\scriptstyle#1$\cr
    \noalign{\kern.2ex}
    $#2$\cr
}}}}
 \newtheorem{theorem}{Theorem}
 \newtheorem{proposition}{Proposition}
 \newtheorem{corollary}{Corollary}
 \newtheorem{definition}{Definition}
\theoremstyle{definition}
\newtheorem{example}{Example}[section]
        \newcommand{\todo}[1]{\textcolor{olive}{{#1}}}
        \newcommand{\hh}[1]{\textcolor{teal}{\texttt{ Hassan: #1}}}
        \newcommand{\vh}[1]{{\color{red}\texttt{ Vahid: #1}}}
        \newcommand{\jk}[1]{{\color{blue}\texttt{ Jan: #1}}}
        \newcommand{\todo}[1]{}
        \newcommand{\hh}[1]{}
        \newcommand{\vh}[1]{}
        \newcommand{\jk}[1]{}
\newcommand{\bigO}[1]{\mathcal{O}(#1)}
\renewcommand{\epsilon}{\varepsilon}
\newcommand{\reals}{\mathbb{R}}
\newcommand{\posreals}{\reals^{\geq 0}}
\newcommand{\rationals}{\mathbb{Q}}
\newcommand{\setcond}[2]{\{\, #1 \mid #2 \,\}}
\newcommand{\setcardinality}[1]{|#1|}
\newcommand{\powerSet}[1]{2^{#1}}
\newcommand{\sd}{\mu} 
\newcommand{\gd}{\rho} 
\newcommand{\functioneval}[2]{#1(#2)}
\newcommand{\Disc}[1]{\Delta(#1)}
\newcommand{\imdp}[1][M]{\mathfrak{#1}}
\newcommand{\stateSet}{S}
\newcommand{\actionSet}{{A}}
\newcommand{\APSet}{\mathit{AP}}
\newcommand{\APLabelling}{L}
\newcommand{\intTransitionProbability}{\mathit{I}}
\newcommand{\intervalSet}{\mathbb{I}}
\newcommand{\dotteddiamond}{\kern-1pt\setbox0=\vbox{\hbox{$\diamond$}}{\ooalign{\hfil\box0\hfil\cr\hfil$\mkern-0.5mu \cdot$\hfil\crcr}}\kern-1pt}
\newcommand{\LPproblemMatchingRelGdaBetaaGdbBetab}[5]{{#2 \mystackrel{#3}{\Longrightarrow}_{\combined} \dotteddiamond\liftrel[{#1}]\dotteddiamond{}\hspace{2pt} {}_{\kern4pt \combined}{\mystackrel{\mkern.5mu #5}{\Longleftarrow}}#4}}
\newcommand{\node}{\mathcal{V}}
\newcommand{\last}[1]{\mathit{last}(#1)}
\newcommand{\combined}{\mathrm{C}}
\newcommand{\relord}[1][\relsymbol]{#1}
\newcommand{\liftrel}[1][\relord]{\mathrel{\liftrelord[#1]}}
\newcommand{\liftrelord}[1][\relord]{\mathord{\mathcal{L}(#1)}}
\newcommand{\equivclass}[0]{\mathcal{C}}
\newcommand{\relclass}[2]{[#1]_{#2}}
\newcommand{\quotienting}[2]{#1/#2}
\newcommand{\partitionset}[2][\relord]{\quotienting{#2}{#1}}
\newcommand{\strongBisim}{\sim}
\newcommand{\acronym}[1]{\ensuremath{\textsl{#1}}}
\newcommand{\MDP}{\acronym{MDP}}
\newcommand{\IMDP}{\acronym{IMDP}}
\newcommand{\progHeader}[1]{#1}
\newcommand{\ComputeBisimProc}{\procedure{Bisimulation}}
\newcommand{\ViolateProc}{\procedure{Violate}}
\newcommand{\MinimalProc}{\procedure{Minimal}}
\newcommand{\partitioningord}{\relord[R]}
\def\vec#1{\mathbf{#1}}
\newcommand{\naturals}{\mathbb{N}}
\newcommand{\procedure}[1]{\texttt{#1}}
\newcommand{\J}{\Join}
\newcommand{\nxt}{{\text{\sf X}}}
\newcommand{\unt}{{\text{\sf U}}}
\newcommand{\pctlP}{{\text{\sf P}}}
\newcommand{\true}{{\textit{true}}}
\newcommand{\fpth}{\mathit{Paths}_{\mathit{fin}}}
\newcommand{\infpth}{\mathit{Paths}_{\mathit{inf}}}
\newcommand{\ifpth}[1]{\mathit{Paths}_{#1}}   
\newcommand{\Sch}{\Sigma}
\newcommand{\Sched}{\emph{Scheduler}}
\newcommand{\Envir}{\emph{Nature}}
\newcommand{\sched}{\emph{scheduler}}
\newcommand{\envir}{\emph{nature}}
\newcommand{\Env}{\Pi}
\newcommand{\env}{\pi}
\newcommand{\sch}{\sigma}
\newcommand{\uncertainty}[2]{\mathcal{E}^{#1,#2}}
\newcommand{\paruncertainty}[3]{\mathcal{E}^{#1,#2}_{#3}}
\newcommand{\parcombuncertainty}[2]{\mathcal{E}^{#1}_{#2}}
\newcommand{\distrib}[1]{\Disc{#1}}
\def\prob{\textnormal{Pr}}
\newcommand{\dis}{\sd}
\newcommand{\un}{{(\forall)}}
\newcommand{\ex}{{(\exists)}}
\newcommand{\controlSyn}{{(\exists\sch\forall)}}
\newcommand{\paramSyn}{{(\exists\env\forall)}}
\newcommand{\synthesis}{{(\exists\forall)}}
\newcommand{\pat}{\omega}
\newcommand{\sigmaalgebra}{\mathcal{B}}
\newcommand{\transition}[2]{{#1 \longrightarrow #2}}
\newcommand{\transitionAction}[3]{{#1 \mystackrel{#2}{\longrightarrow} #3}}
\newcommand{\eqclass}{\mathcal{C}}
\DeclareMathOperator*{\convexComb}{conv}
\newcommand{\tra}[1]{{}\mathchoice%
    {\stackrel{#1}{\longrightarrow}}
    {\mathop {\smash\longrightarrow}\limits^{\vrule width 0pt height 0pt depth 4pt\smash{#1}}}
    {\stackrel{#1}{\longrightarrow}}
    {\stackrel{#1}{\longrightarrow}}
{}}
\title{Probabilistic Bisimulations for PCTL Model Checking of Interval MDPs}
\author{Vahid Hashemi
\institute{MPI f\"{u}r Informatik\\ Saarbr\"{u}cken, Germany}
\institute{Department of Computer Science\\
Saarland University\\
Saarbr\"{u}cken, Germany}
\email{hashemi@mpi-inf.mpg.de}
\and
Hassan Hatefi
\institute{Department of Computer Science\\
Saarland University\\
Saarbr\"{u}cken, Germany}
\email{hhatefi@cs.uni-saarland.de}
\and
 Jan Kr\v{c}\'{a}l
\institute{Department of Computer Science\\
Saarland University\\
Saarbr\"{u}cken, Germany}
\email{krcal@cs.uni-saarland.de}
}
\begin{document}
\maketitle
\begin{abstract}
%
Verification of PCTL properties of MDPs with convex uncertainties has been investigated recently by Puggelli et al. However, model checking algorithms typically suffer from state space explosion. In this paper, we address probabilistic bisimulation to reduce the size of such an MDPs while preserving PCTL properties it satisfies. 
We discuss different interpretations of uncertainty in the models which are studied in the literature and that result in two different definitions of bisimulations.
We give algorithms to compute the quotients of these bisimulations in time polynomial in the size of the model and exponential in the uncertain branching. 
Finally, we show by a case study that large models in practice can have small branching and that a substantial state space reduction can be achieved by our approach.
%
\end{abstract}

\section{Introduction}
\label{sec:Introduction}

Modelling formalisms like Markov decision processes (MDP)\cite{Puterman:1994:MDP:528623} or Probabilistic automata (PA)~\cite{Segala-thesis} are used for representing systems that combine non-deterministic and probabilistic behaviour.
They can be viewed as transition systems where in each step an outgoing transition of the current state is chosen \emph{non-deterministically} and the successor state is chosen \emph{randomly} according to a fixed probability distribution assigned to this transition.
Assigning fixed probability distributions to transitions is however not realistic~\cite{DBLP:conf/lics/JonssonL91, DBLP:journals/rc/KozineU02} in many modelling scenarios: measurement errors, statistical estimates, or mathematical approximations all lead to \emph{intervals} instead of fixed probabilities.

\emph{Interval MDPs}~\cite{DBLP:conf/cav/PuggelliLSS13} (also called \emph{Bounded-parameter MDPs}~\cite{DBLP:journals/ai/GivanLD00,DBLP:journals/ai/WuK08}) address this need by
bounding the probabilities of each successor state by an interval instead of a fixed number.
In such a model, the transition probabilities are not fully specified and this uncertainty again needs to be resolved non-deterministically. 
The two sources of non-determinism have \emph{different} interpretation in different applications: 
\begin{enumerate}
\item In verification of parallel systems with uncertain transition probabilities~\cite{DBLP:conf/cav/PuggelliLSS13} the transitions correspond to unpredictable interleaving of computation of the communicating agents. Hence, both the choice of transitions and their probability distributions is \emph{adversarial}.
\item In control synthesis for systems with uncertain probabilities~\cite{DBLP:conf/cdc/WolffTM12} the transitions correspond to various control actions. We search for a choice of transitions that is \emph{optimal} against an adversarial choice of probability distributions satisfying the interval bounds.
\item In parameter synthesis for parallel systems~\cite{DBLP:conf/nfm/HahnHZ11} the transition probabilities are underspecified to allow freedom in implementation of such a model. We search for a choice of probability distributions that is optimal for adversarial choice of transitions (again stemming from the possible interleaving).
\end{enumerate}
Furthermore, the choice of probability distributions satisfying the interval constraints can be either resolved statically~\cite{DBLP:conf/lics/JonssonL91}, i.e. at the beginning once for all, or dynamically~\cite{DBLP:journals/mor/Iyengar05, DBLP:conf/tacas/SenVA06}, i.e. independently for each computation step. Here, we focus on the dynamic approach that is easier to work with algorithmically and can be seen as a relaxation of the static approach that is 
often intractable~\cite{DBLP:conf/tacas/BenediktLW13,DBLP:conf/tacas/SenVA06,DBLP:conf/fossacs/ChatterjeeSH08,DBLP:conf/lata/DelahayeLLPW11}.
%

%
%
%

There are several algorithms~\cite{DBLP:conf/cdc/WolffTM12, DBLP:conf/cav/PuggelliLSS13} to check whether a given interval MDP satisfies a given specification expressed in a \emph{logic} like PCTL~\cite{DBLP:journals/fac/HanssonJ94} or LTL~\cite{DBLP:conf/focs/Pnueli77}. 
However, models often suffer from state space explosion when obtained using some higher-level modelling formalism
such as a process algebra. These models usually contain redundancy that can be removed without changing the behaviour of the model.
One way to reason about such behavioural equivalence is \emph{bisimulation}~\cite{DBLP:books/daglib/0067019}. For a given huge model it allows to construct the \emph{bisimulation quotient}, the smallest model with equivalent behaviour -- in particular preserving all its properties expressible by a logic such as PCTL.


\paragraph{Our contribution}

In this paper, we define the first bisimulations for interval MDPs (that are also the first bisimulations for MDPs with uncertain transitions in general). We show that different interpretation of non-determinism yields two different bisimulations: one for models where the two non-determinisms are resolved in a \emph{cooperative way} (see point 1. above), another for models where it is resolved in a \emph{competitive way} (see points 2. and 3. above).
%

%
%
%
%

%
%

Furthermore, we show how to compute these bisimulations by algorithms
based on comparing polytopes of probability distributions associated
with each transition. The algorithms are fixed parameter tractable
with respect to the maximal dimension of the polytopes (i.e. maximal
number of different states that an uncertain transition can lead to); in the competitive case also with respect to the maximal number of outgoing uncertain transitions.
Note that in many applications these parameters are small.

We finally argue by a case study that, if uncertainty stems from a
small number of different phenomena such as \emph{node failure} or
\emph{loss of a message}, the same shape of polytopes will repeat many
times over the states space.  We demonstrate that the redundancy in
this case may result in a massive state space minimisation.

\begin{example} \label{ex:intro}
We illustrate the contribution by two examples. In the first one, we explain how the competitive and the cooperative resolution of non-determinism result in different behavioural equivalences. Consider the three pair of states below.

\begin{center}
\begin{tikzpicture}[x=2.5cm,y=1.2cm]

\node[above]  at (3em,0.2) {cooperative - different:};

\begin{scope}[xshift=14em]
\node[above]  at (3em,0.2) {cooperative - same:};
\end{scope}

\begin{scope}[xshift=29em]
\node[above]  at (3em,0.2) {competitive - same:};
\end{scope}

\end{tikzpicture}
\begin{tikzpicture}[x=2.5cm,y=1.2cm,outer sep=0.5mm,
state/.style={draw,circle, inner sep =0.25em,text centered},
trans/.style={font=\footnotesize},
prob/.style={font=\scriptsize,xscale=0.9}
]

\begin{scope}[]

\begin{scope}

\node[state] (top) at (0,0) {$s$};
\node[state] (left) at (-0.3,-2) {$\ell$};
\node[state] (right) at (0.3,-2) {$r$};

\path[->] (top) edge[in=90,out=180,looseness=1] node[above left=-4pt,pos=0.1,trans]{$a$} node[pos=0.4,name=topleft,inner sep=0,outer sep=0]{} node[above=-3pt,sloped,pos=0.8,prob] {$[0.3,0.7]$} (left);

\path[->] (top) edge[in=90,out=0,looseness=1] node[above right=-4pt,pos=0.1,trans]{$b$} node[pos=0.4,name=topright,inner sep=0,outer sep=0]{} node[above=-3pt,pos=0.8,prob,sloped] {$[0.2,0.6]$} (right);

\path[->] (topleft) edge[bend right=5] node[above=-3pt,pos=0.2,prob,sloped] {$[0,1]$} (right);
\path[->] (topright) edge[bend left=5] node[above=-3pt,pos=0.2,prob,sloped] {$[0,1]$} (left);
\end{scope}

\begin{scope}[xshift=6em]
\node[state] (top) at (0,0) {$\overline{s}$};
\node[state] (left) at (-0.3,-2) {$\ell$};
\node[state] (right) at (0.3,-2) {$r$};

\path[->] (top) edge[in=90,out=180,looseness=1] node[above left=-4pt,pos=0.1,trans]{$a$} node[pos=0.4,name=topleft,inner sep=0,outer sep=0]{} node[above=-3pt,sloped,pos=0.8,prob] {$[0.3,0.7]$} (left);

\path[->] (top) edge[in=90,out=0,looseness=1] node[above right=-4pt,pos=0.1,trans]{$c$} node[pos=0.4,name=topright,inner sep=0,outer sep=0]{} node[above=-3pt,pos=0.8,prob,sloped] {$[0.7,0.8]$} (right);

\path[->] (topleft) edge[bend right=5] node[above=-3pt,pos=0.2,prob,sloped] {$[0,1]$} (right);
\path[->] (topright) edge[bend left=5] node[above=-3pt,pos=0.2,prob,sloped] {$[0,1]$} (left);
\end{scope}

\draw [dotted] (1.5,0.3) -- (1.5,-2.3);

\end{scope}

\begin{scope}[xshift=14em]

\begin{scope}
\node[state] (top) at (0,0) {$t$};
\node[state] (left) at (-0.3,-2) {$\ell$};
\node[state] (right) at (0.3,-2) {$r$};

\path[->] (top) edge[in=90,out=180,looseness=1] node[above left=-4pt,pos=0.1,trans]{$a$} node[pos=0.4,name=topleft,inner sep=0,outer sep=0]{} node[above=-3pt,sloped,pos=0.8,prob] {$[0.1,0.3]$} (left);

\path[->] (top) edge[in=90,out=0,looseness=1] node[above right=-4pt,pos=0.1,trans]{$b$} node[pos=0.4,name=topright,inner sep=0,outer sep=0]{} node[above=-3pt,pos=0.8,prob,sloped] {$[0,1]$} (right);

\path[->] (topleft) edge[bend right=5] node[above=-3pt,pos=0.2,prob,sloped] {$[0.8,1]$} (right);
\path[->] (topright) edge[bend left=5] node[above=-3pt,pos=0.2,prob,sloped] {$[0.2,0.6]$} (left);
\end{scope}

\begin{scope}[xshift=7em]
\node[state] (top) at (0,0) {$\overline{t}$};
\node[state] (left) at (-0.3,-2) {$\ell$};
\node[state] (right) at (0.3,-2) {$r$};

\path[->] (top) edge[in=90,out=180,looseness=1] node[above left=-4pt,pos=0.1,trans]{$c$} node[pos=0.4,name=topleft,inner sep=0,outer sep=0]{} node[above=-3pt,sloped,pos=0.8,prob] {$[0.1,1]$} (left);

\path[->] (top) edge[in=90,out=0,looseness=1] node[above right=-4pt,pos=0.1,trans]{$d$} node[pos=0.4,name=topright,inner sep=0,outer sep=0]{} node[above=-3pt,pos=0.8,prob,sloped] {$[0,0.8]$} (right);

\path[->] (topleft) edge[bend right=5] node[above=-3pt,pos=0.2,prob,sloped] {$[0.4,0.9]$} (right);
\path[->] (topright) edge[bend left=5] node[above=-3pt,pos=0.2,prob,sloped] {$[0.2,0.4]$} (left);
\end{scope}

\draw [dotted] (1.7,0.3) -- (1.7,-2.3);

\end{scope}

\begin{scope}[xshift=29em]

\begin{scope}
\node[state] (top) at (0,0) {$u$};
\node[state] (left) at (-0.3,-2) {$\ell$};
\node[state] (right) at (0.3,-2) {$r$};

\path[->] (top) edge[in=90,out=180,looseness=1] node[above left=-4pt,pos=0.1,trans]{$a$} node[pos=0.4,name=topleft,inner sep=0,outer sep=0]{} node[above=-3pt,sloped,pos=0.8,prob] {$[0.1,0.6]$} (left);

\path[->] (top) edge[in=90,out=0,looseness=1] node[above right=-4pt,pos=0.1,trans]{$b$} node[pos=0.4,name=topright,inner sep=0,outer sep=0]{} node[above=-3pt,pos=0.8,prob,sloped] {$[0,1]$} (right);

\path[->] (topleft) edge[bend right=5] node[above=-3pt,pos=0.2,prob,sloped] {$[0,1]$} (right);
\path[->] (topright) edge[bend left=5] node[above=-3pt,pos=0.2,prob,sloped] {$[0,0.6]$} (left);
\end{scope}

\begin{scope}[xshift=6em]
\node[state] (top) at (0,0) {$\overline{u}$};
\node[state] (left) at (-0.3,-2) {$\ell$};
\node[state] (right) at (0.3,-2) {$r$};

\path[->] (top) edge[in=90,out=180,looseness=1] node[above left=-4pt,pos=0.1,trans]{$a$} node[pos=0.4,name=topleft,inner sep=0,outer sep=0]{} node[above=-3pt,sloped,pos=0.8,prob] {$[0.1,0.6]$} (left);

\path[->] (top) edge[in=90,out=0,looseness=1] node[above right=-4pt,pos=0.1,trans]{$c$} node[pos=0.4,name=topright,inner sep=0,outer sep=0]{} node[above=-3pt,pos=0.8,prob,sloped] {$[0,1]$} (right);

\path[->] (topleft) edge[bend right=5] node[above=-3pt,pos=0.2,prob,sloped] {$[0,1]$} (right);
\path[->] (topright) edge[bend left=5] node[above=-3pt,pos=0.2,prob,sloped] {$[0.1,0.8]$} (left);
\end{scope}


\end{scope}

\end{tikzpicture}
\end{center} 
\end{example}
As regards the cooperative non-determinism, $s$ has not the same behaviour as $\overline{s}$ since $\overline{s}$ can move to $r$ with probability $0.8$ by choosing $c$ and $(\ell \mapsto 0.2, r \mapsto 0.8)$, which ${s}$ cannot simulate. So far the equivalence might seem easy to check. However, note that $t$ has the same behaviour as $\overline{t}$ even though the interval bounds for the transitions quite differ. 
Indeed, the sets of distributions satisfying the interval constraints are the same for $t$ and $\overline{t}$.

As regards the competitive non-determinism, observe that $u$ and $\overline{u}$ have also the same behaviour.
Indeed, the $a$ transitions coincide and both $b$ and $c$ offer a wider choice of probability distributions than $a$. If the most adversarial choice of the distribution scheduler lies in the difference $[b] \setminus [a]$ of the distributions offered by $b$ and $a$, the transition scheduler then never chooses $b$; hence $a$ in $\overline{u}$ can simulate both $a$ and $b$ in $u$. In the other direction it is similar and $u$ and $\overline{u}$ have the same behaviour although $[b] \neq [c]$.

\begin{example} In the second example, we explain the 
redundancy of large models with a small source of uncertainty. 
Consider a Wireless Sensor Network (WSN)
containing $N$ sensors $S_1,S_2\cdots S_N$ and a gateway $G$, all
communicating over an unreliable channel.  For simplicity, we assume
that each sensor continuously sends some data to the gateway which are
then pushed into an external server for further analysis.  As the
channel is unreliable, with some positive probability $p$ each message with data may get lost.
The WSN can be seen as the parallel composition
of gateway $G$ and sensors $S_i$ 
depicted below
that synchronise over labels \texttt{send}$_i$'s and \texttt{receive}$_i$'s.

\vspace{-2em}
\begin{figure}[h]
  \centering
  \begin{subfigure}[b]{0.45\textwidth}
    \centering
    \begin{tikzpicture}[shorten >=1pt,auto,>=stealth',semithick,baseline]
      \tikzstyle{every node}=[circle, thick, inner sep = 0pt]
      \node [draw, minimum size = 10mm, initial, initial text={}] (0) at (8,1) {\texttt{rec}};
      \path
      (0) edge [loop above] node [above = -5mm] {\texttt{receive}$_i \;\;\; \forall i$} (0);
    \end{tikzpicture}
    \caption{Gateway $G$}
  \end{subfigure}%
  \begin{subfigure}[b]{0.45\textwidth}
    \centering
    \begin{tikzpicture}[shorten >=1pt,auto,>=stealth',semithick,baseline]
      \tikzstyle{every node}=[circle, thick, inner sep = 0pt]
      \node [draw, minimum size = 10mm, initial, initial text={}] (0) at(2,1){\texttt{succ}};
      \node [draw,minimum size = 10mm] (1) at (6,1) {\texttt{fail}};
      \node [draw, fill, minimum size = 1mm] (00) at (3.5,2) {};
      \node [draw, fill, minimum size = 1mm] (11) at (4.5,0) {};
      \path
      (0) edge [shorten >= 0] node {\texttt{send}$_i$} (00)
      (00)edge [->,bend left=30] node[below=2pt] {$p$}              (1)
          edge [->,bend left=30] node[pos=0.3] {$1-p$}                (0) 
      (1) edge [shorten >= 0] node {\texttt{send}$_i$} (11)
      (11)edge [->,bend left=30] node {$p$}              (1)
          edge [->,bend left=30] node[swap] {$1-p$}                (0) 
      ;
    \end{tikzpicture}
    \caption{Sensor $S_i$}
  \end{subfigure}%
  \label{fig:wsn}
\end{figure}
For
instance environmental effects on radio transmission, mobility of
sensor nodes or traffic burst (see
e.~g.~\cite{mringwal:phdthesis:2009}) cause that the exact probability of
failure is unknown. 
The estimation of this probability, 
e.g. 
by empirical data analysis, usually leads to an 
interval $p \in [\ell, u]$ which turns the model into an interval MDP.  

Let us stress that 
there is only one source of uncertainty appearing
all over the state space no matter what is the number of sensors $N$.
This makes many states of the model behave
similarly.  For example in the WSN, the parallel composition of the
above model has $2^N$ states.  However one can show that the
bisimulation quotient has only $N+1$ states. Indeed, all states that have the same number of failed sensors have the same behaviour.  
Thus, for limited source of uncertainty in a model obtained by compositional modelling, the state space reduction may be enormous.

\end{example}

\paragraph{Related work}
Various probabilistic modelling formalisms with uncertain transitions are studied in the literature. Interval Markov chains~\cite{DBLP:conf/lics/JonssonL91,DBLP:journals/rc/KozineU02} or Abstract Markov chains~\cite{DBLP:conf/spin/FecherLW06} extend standard discrete-time Markov chains (MC) with interval uncertainties and thus do not feature the non-deterministic choice of transitions. Uncertain MDPs~\cite{DBLP:journals/ior/NilimG05,DBLP:conf/cdc/WolffTM12,DBLP:conf/cav/PuggelliLSS13} allow more general sets of distributions to be associated with each transition, not only those described by intervals. Usually, they restrict to \emph{rectangular uncertainty sets} requiring that the uncertainty is linear and independent for any two transitions of any two states. Our general algorithm working with polytopes can be easily adapted to this setting. Parametric MDPs~\cite{DBLP:conf/nfm/HahnHZ11} to the contrary allow such dependencies as every probability is described as a rational function of a finite set of global parameters.

From the side of view of compositional specification, Interval Markov chains~\cite{DBLP:conf/lics/JonssonL91} and Abstract probabilistic automata~\cite{DBLP:conf/vmcai/DelahayeKLLPSW11, DBLP:conf/acsd/DelahayeKLLPSW11} serve as specification theories for MC and PA featuring satisfaction relation, and various refinement relations. In order to be closed under parallel composition, Abstract PA allow general polynomial constraints on probabilities instead of interval bounds. Since for Interval MC it is not possible to explicitly construct parallel composition, the problem whether there is a common implementation of a set of Interval Markov chains is addressed instead~\cite{DBLP:conf/lata/DelahayeLLPW11}. To the contrary, interval bounds on \emph{rates} of outgoing transitions work well with parallel composition in the continuous-time setting of Abstract interactive Markov chains~\cite{DBLP:conf/formats/KatoenKN09}. The reason is that unlike probabilities, the rates do not need to sum up to $1$. A different way~\cite{yi1994reasoning} to successfully define parallel composition for interval models is to separate synchronising transitions from the transitions with uncertain probabilities. This is also the core of our approach to parallel composition when constructing a case study as discussed in Section~\ref{sec:case-studies}.

We are not aware of any existing bisimulation for uncertain or parametric probabilistic models. Among similar concepts studied in the literature are simulation~\cite{yi1994reasoning} and refinement~\cite{DBLP:conf/lics/JonssonL91, DBLP:conf/lata/DelahayeLLPW11,DBLP:conf/vmcai/DelahayeKLLPSW11} relations for previously mentioned models. Our definition of bisimulation in the competitive setting is inspired by the alternating bisimulation~\cite{DBLP:conf/concur/AlurHKV98,DBLP:conf/csl/ChatterjeeCK12}. 

Many new verification algorithms for interval models appeared in last few years. 
Reachability and expected total reward is addressed for for Interval MC~\cite{DBLP:journals/ipl/ChenHK13} as well as Interval MDP~\cite{DBLP:journals/ai/WuK08}. 
PCTL model checking and LTL model checking are studied for Interval MC~\cite{DBLP:conf/fossacs/ChatterjeeSH08, DBLP:journals/ipl/ChenHK13,DBLP:conf/tacas/BenediktLW13} and also for Interval MDP~\cite{DBLP:conf/cav/PuggelliLSS13,DBLP:conf/cdc/WolffTM12}. 
Among other technical tools, all these approaches make use of (robust) dynamic programming relying on the fact that transition probability distributions are resolved dynamically. For the static resolution of distributions, adaptive discretisation technique for PCTL parameter synthesis is given in~\cite{DBLP:conf/nfm/HahnHZ11}. 
Uncertain models are also widely studied in the control community~\cite{DBLP:journals/ai/GivanLD00, DBLP:journals/ior/NilimG05,DBLP:journals/ai/WuK08}, mainly interested in maximal expected finite-horizon reward or maximal expected discounted reward.

 \paragraph{Structure of the paper}
 We start with necessary preliminaries in
 Section~\ref{sec:preliminaries}. In Section~\ref{sec:bisimulation},
 we give the definitions of probabilistic bisimulations for
 interval MDP and discuss their properties and differences. In
 Section~\ref{sec:deciding_strong_bisim}, we give the FPT algorithms
 for both cooperative and competitive cases. Finally, in
 Section~\ref{sec:case-studies} we demonstrate our approach on a case
 study. Due to space limitations, we refer the reader interested in detailed proofs to~\cite{HHK14}.

\section{Preliminaries}
\label{sec:preliminaries}


In this paper, the sets of all positive integers, rational numbers, real numbers and non-negative real numbers are denoted by 
$\naturals$, $\rationals$, $\reals$, and $\posreals$, respectively. 
For a set $X$, we denote by $\Disc{X}$ the set of discrete probability distributions over $X$.

\subsection{Interval Markov Decision Processes}
Let us formally define Interval \MDP{}.

\begin{definition}[\IMDP]\label{def:imdp}
	An \emph{Interval Markov Decision Process} (\IMDP{}) $\imdp$ is a tuple $(\stateSet, \actionSet, \APSet,\APLabelling, \intTransitionProbability)$, where $\stateSet$ is a finite set of \emph{states}, $\actionSet$ is a finite set of \emph{actions}, $\APSet$ is a finite set of \emph{atomic propositions}, $\APLabelling \colon \stateSet \to \powerSet{\APSet}$ is a \emph{labelling function}, and $\intTransitionProbability \colon \stateSet \times \actionSet \times \stateSet \to \intervalSet$ is an \emph{interval transition probability function} where $\intervalSet$ is a set of subintervals of $[0,1]$.
\end{definition}
\noindent
Furthermore, for each state $s$ and action $a$, we denote by 
$s \tra{a} \dis$ that $\dis\in \distrib{\states}$ is a \emph{feasible distribution}, i.e. for each state $s'$ we have $\dis(s') \in \intTransitionProbability(s,a,s')$.
%
We require that the set $\{\dis \mid s \tra{a} \dis \}$, also denoted by $\uncertainty{s}{a}$, is non-empty for each state $s$ and action $a$.

An interval MDP is initiated in some state $s_1$ and then moves in discrete steps from state to state forming an infinite path $s_1 \, s_2 \, s_3 \cdots$. One step, say from state $s_i$, is performed as follows. First, an action $a \in \actionSet$ is chosen non-deterministically by {\Sched}. Then, {\Envir} resolves the uncertainty and chooses non-deterministically one corresponding feasible distribution $\dis \in \uncertainty{s_i}{a}$. Finally, the next state $s_{i+1}$ is chosen randomly according to the distribution $\dis$.

Let us define the semantics of an \IMDP{} formally.
%
A \emph{path} is a finite or infinite sequence of states $\pat = s_{1} \, s_{2} \cdots$. 
For a finite path $\pat$, we denote by $\textit{last}(\pat)$ the last state of $\pat$.
The set of all finite paths 
and the set of all infinite paths 
are denoted by $\fpth$ and $\infpth$, respectively. Furthermore, let $\ifpth{\pat} = \{\pat\pat' \mid \pat'\in\infpth \}$ denote the set of paths that have the finite prefix $\pat \in \fpth$.

%
%
%



\begin{definition}[{\Sched} and \Envir]{\label{def:scheduler}}
A {\sched} 
is a function $\sch : \fpth \to \distrib{\actionSet}$ that to each finite path $\pat$ assigns a distribution over the set of actions. 
A {\envir} is a function $\env : \fpth \times \actionSet \to \distrib{\stateSet}$ that to each finite path $\pat$ and action $a$ assigns a feasible distribution, i.e. an element of $\uncertainty{s}{a}$ where $s = \last{\pat}$.
We denote by $\Sch$ the set of all schedulers and by $\Env$ the set of all natures.
\end{definition}
\noindent
For an initial state $s$, a scheduler $\sch$, and a nature $\env$, let $\prob^{\sch,\env}_s$ denote the unique probability measure over $(\infpth, \sigmaalgebra)$\footnote{
Here, $\sigmaalgebra$ is the standard $\sigma$-algebra over $\infpth$ generated from the set of all cylinder sets $\{\ifpth{\pat} \mid \pat \in\fpth \}$. The unique probability measure is obtained by the application of the by extension theorem (see, e.g.~\cite{Billingsley1979}).
%
} such that the probability $\prob^{\sch,\env}_s[\ifpth{s'}]$ of starting in $s'$ equals $1$ if $s = s'$ and $0$, otherwise; and the probability $\prob^{\sch,\env}_s[\ifpth{\pat s'}]$ of traversing a finite path $\pat s'$ equals $\prob^{\sch,\env}_s[\ifpth{\pat}] \cdot \sum_{a\in\actionSet} \sch(\pat)(a) \cdot \env(\pat,a)(s')$.

%

Observe that the scheduler does not choose an action but a \emph{distribution} over actions. It is well-known~\cite{Segala-thesis} that such randomisation brings more power in the context of bisimulations. 
To the contrary, nature is not allowed to randomise over the set of feasible distributions $\uncertainty{s}{a}$. This is in fact not necessary, since the set $\uncertainty{s}{a}$ is closed under convex combinations.
Finally, a scheduler $\sch$ is said to be \emph{deterministic} if $\sch(\pat)(a)=1$ for some action $a$ for all finite paths $\pat$.

\subsection{Probabilistic Computation Tree Logic (PCTL)}
\label{sec:pctl}

There are various ways how to describe properties of interval \MDP{}s. Here we focus on \emph{probabilistic CTL} (PCTL)~\cite{DBLP:journals/fac/HanssonJ94}.
The syntax of PCTL state formulas $\varphi$ and PCTL path formulas $\psi$ is given by:
\begin{align*}
\varphi &:= \true \mid x  \mid \neg\varphi \mid \varphi_1\wedge \varphi_2 \mid
\pctlP_{\J p}(\psi) \\
\psi &:= \nxt \varphi \mid \varphi_1\unt\varphi_2 \mid \varphi_1\unt^{\leq k}\varphi_2 \notag
\end{align*}
where $x\in AP$, ${p}\in [0,1]$ is a rational constant, $\J\in\{\leq, <,\geq, > \}$, and $k\in \naturals$.

The satisfaction relation for PCTL formulae depends on the way how non-determinism is resolved for the probabilistic operator $\pctlP_{\J p}(\psi)$. 
When quantifying both the non-determinisms universally, we
%
define the satisfaction relation $s \models_\un \varphi$ as follows: 
$s \models_\un x$ if $x \in \APLabelling(s)$; $s \models_\un \neg\varphi$ if not $s \models_\un \varphi$; $s \models_\un \varphi_1 \land \varphi_2$ if both $s \models_\un \varphi_1$ and $s \models_\un \varphi_2$; and
%
\begin{align*}
 s \;\models_\un\; \pctlP_{\J p} (\psi) 
\quad\mbox{ if }\quad 
\forall \sch\in\Sch \;\; \forall \env\in\Env: \quad
\prob^{\sch,\env}_s \left[\models_\un \psi \right] \J p.
\tag*{$\un$}
\end{align*}
where $\models_\un \psi$ denotes the set of infinite paths $\{\pat\in \infpth \mid \pat \models_\un \psi\}$ and the satisfaction relation $\pat \models_\un \psi$ for an infinite path $\pat = s_1 s_2 \cdots$ and a path formula $\psi$ is given by:
\begin{align*}
\pat &\models_\un \;\; \nxt\varphi
&& \text{if} \quad 
\text{$s_2 \models \varphi$;} \\
\pat &\models_\un \;\; \varphi_1\unt^{\leq k}\varphi_2 
&& \text{if} \quad 
\text{there exists $i\leq k$ such that $s_i \models_\un \varphi_2$,} \\
&&& \quad \quad
\text{and $s_j \models_\un \varphi_1$ for every $0\le j<i$;} \\
\pat &\models_\un \;\; \varphi_1\unt\varphi_2 
&& \text{if} \quad 
\text{there exists $k\in\naturals$ such that $\pat \models_\un \varphi_1\unt^{\leq k}\varphi_2$.}
\end{align*}

\noindent
It is easy to show that the set $\models_\un \psi$ is measurable for any path formula $\psi$, hence the definition is correct.
We explain how the semantics differs for different resolution of non-determinism in the next section.

\section{Probabilistic Bimulations for Interval Markov decision processes}
\label{sec:bisimulation}

%
Let us fix an interval MDP $(\stateSet, \actionSet, \APSet,\APLabelling, \intTransitionProbability)$.
In this section, we define probabilistic bisimulations for different interpretations of Interval MDP. Namely the bisimulation $\bisim_\un$ for the cooperative setting and bisimulations $\bisim_\controlSyn$ and $\bisim_\paramSyn$ for two different applications for the competitive setting. We then show that $\bisim_\controlSyn$ and $\bisim_\paramSyn$ actually coincide.

\subsection{Cooperative resolution of non-determinism}

In the context of verification of parallel systems with uncertain transition probabilities, it makes sense to assume that $\Sched$ and $\Envir$ are resolved \emph{cooperatively} in the most \emph{adversarial} way. This setting yields a bisimulation quite similar to standard probabilistic bisimulation for models with one type of non-determinism~\cite{DBLP:journals/iandc/LarsenS91}.
%
First, let us denote by $s\tra{} \dis$ that a transition from $s$ according to $\dis$ can be taken cooperatively, i.e. that there is a decision $\gd \in \dist(\actionSet)$ of $\Sched$ and decisions $s\tra{a} \dis_a$ of $\Envir$ for each $a$ such that $\dis = \sum_{a\in\actionSet} \gd(a) \cdot \dis_a$. In other words, $s\tra{} \dis$ if $\dis \in \convexComb \{\uncertainty{s}{a} \mid a\in\actionSet\}$ where $\convexComb X$ denotes the convex hull of $X$.

\begin{definition}\label{def:bisim-cooperative}
Let $R \subseteq \stateSet \times \stateSet$ be a equivalence relation. We say that $R$ is \emph{probabilistic $\un$-bisimulation} if for any $(s,t) \in R$ we have that $\APLabelling(s) = \APLabelling(t)$ and
\begin{align*}
\text{for each} \;\;\;& \transition{s}{\mu} \\
\text{there is} \;\;\;& \transition{t}{\nu} \;\;\; \text{such that} \;\; \text{$\mu(\equivclass) =  \nu(\equivclass)$ for each equivalence class $\equivclass \in \partitionset{\stateSet}$}.
\end{align*}
%
%
Furthermore, we write $s \strongBisim_\un t$ if there is a probabilistic $\un$-bisimulation $R$ such that $(s,t) \in R$.
\end{definition}

Intuitively, each (cooperative) step of $\Sched$ and $\Envir$ from state $s$ needs to be matched by a (cooperative) step of $\Sched$ and $\Envir$ from state $t$; symmetrically, $s$ also needs to match $t$.
As a first result, we show that the bisimulation $\bisim_\un$ preserves the (cooperative) universally quantified PCTL satisfaction $\models_\un$. 

\begin{theorem}\label{thm:coop-bisim}
For states $s \strongBisim_\un t$ and any PCTL formula $\varphi$, we have $s \models_\un \varphi$ if and only if $t \models_\un \varphi$.
\end{theorem}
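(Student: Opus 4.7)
My proof plan would proceed by structural induction on the PCTL state formula $\varphi$. The base cases $\varphi = \true$ and $\varphi = x$ for $x \in \APSet$ follow immediately from the fact that any $\un$-bisimulation requires $\APLabelling(s) = \APLabelling(t)$. The boolean cases $\neg\varphi$ and $\varphi_1 \wedge \varphi_2$ are direct from the induction hypothesis. The entire content of the theorem lies in the case $\varphi = \pctlP_{\J p}(\psi)$.

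For the probabilistic case, by symmetry of $\strongBisim_\un$ it suffices to show: for every scheduler $\sch$ and nature $\env$ from $s$, there exist $\sch',\env'$ from $t$ with $\prob^{\sch,\env}_s[\models_\un \psi] = \prob^{\sch',\env'}_t[\models_\un \psi]$. The crucial tool is the following step-matching lemma, which is just a reformulation of Definition~\ref{def:bisim-cooperative}: whenever $s \strongBisim_\un t$ and $s \transition{}{\mu}$ arises as the combined choice of some $\gd \in \dist(\actionSet)$ and $\{s \tra{a} \mu_a\}_{a\in\actionSet}$, then there is a combined choice at $t$ producing some $\nu$ with $\mu(\equivclass) = \nu(\equivclass)$ for every equivalence class $\equivclass \in \partitionset{\stateSet}$. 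I would use this to inductively build $\sch',\env'$: given a finite path $\pat' = t_1 \cdots t_n$ in the $t$-process, I maintain a coupling to a ``parallel'' finite path $\pat = s_1 \cdots s_n$ with $s_i \strongBisim_\un t_i$ for all $i$, and set $\sch'(\pat'),\env'(\pat',\cdot)$ to realise the $\nu$ matching the $\mu$ chosen by $\sch(\pat),\env(\pat,\cdot)$. Formally one constructs, for each equivalence class $\equivclass \ni t_{n+1}$, a corresponding distribution over the same class in the $s$-process via a standard measurable selection / coupling argument on the common mass $\mu(\equivclass) = \nu(\equivclass)$.

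From this construction one proves, by induction on $n$, that for every $\pat' \in \fpth$ of length $n$ and every union $E$ of cylinders $\ifpth{\pat}$ over paths $\pat$ that are class-wise equivalent to $\pat'$, the measures agree: $\prob^{\sch,\env}_s[E] = \prob^{\sch',\env'}_t[\ifpth{\pat'}]$. Applying this to path formulas, the next-formula case is handled directly using the induction hypothesis on $\varphi$ (which implies that $\models_\un \varphi$ is a union of equivalence classes); bounded until is handled by a straightforward finite induction on $k$; and unbounded until follows by monotone continuity, since $\models_\un \varphi_1 \unt \varphi_2 = \bigcup_{k} \models_\un \varphi_1 \unt^{\leq k} \varphi_2$ is an ascending union of measurable sets. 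With the probabilities equal, the truth of $\pctlP_{\J p}(\psi)$ under the universal quantification over $(\Sch,\Env)$ transfers between $s$ and $t$.

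The main obstacle is the step-matching construction at the level of scheduler/nature histories: an $\env$ chooses a specific feasible distribution, not merely a transition, so the matching $\env'$ must pick a $\nu$ satisfying the coupling equation \emph{globally}, not per action. This is where it is important that the definition considers all $s\transition{}{\mu}$ (arbitrary convex combinations over actions) rather than a per-action matching — otherwise a randomised $\sch$ could produce a $\mu \in \convexComb\{\uncertainty{s}{a} \mid a \in \actionSet\}$ with no single-action counterpart at $t$. A secondary subtlety is ensuring the constructed $\sch',\env'$ are measurable functions on $\fpth$; this is routine but should be mentioned, as it underlies the validity of $\prob^{\sch',\env'}_t$ as a probability measure on $(\infpth, \sigmaalgebra)$.
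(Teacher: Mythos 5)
Your skeleton — structural induction on $\varphi$ with all the weight on the $\pctlP_{\J p}(\psi)$ case, a step-matching reformulation of Definition~\ref{def:bisim-cooperative}, symmetry of $\strongBisim_\un$ to get both directions, and the reduction of path formulas to probabilities of class traces (next via the induction hypothesis, bounded until by finite induction, unbounded until by monotone convergence) — is the standard argument, and it is the kind of proof the authors intend (the paper itself defers the detailed proof to its technical report). However, the invariant you build the scheduler/nature construction on is false as stated, and the construction does not go through literally. You claim that for a $t$-history $\pat'$ of length $n$ one has $\prob^{\sch,\env}_s[E] = \prob^{\sch',\env'}_t[\ifpth{\pat'}]$ where $E$ is the union of the $s$-cylinders over histories class-wise equivalent to $\pat'$. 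In general the $t$-process generates several distinct $t$-histories with the same class trace, each with positive probability; your invariant would force each of their individual cylinder probabilities to equal the same aggregate $s$-mass, and summing over all histories of length $n$ then yields total mass strictly greater than $1$ on the $t$-side whenever such a trace has positive $s$-mass. The correct invariant equates, for each sequence of equivalence classes, the \emph{aggregated} probability of all $s$-histories with that class trace with the \emph{aggregated} probability of all $t$-histories with that class trace.

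Relatedly, maintaining a single ``parallel'' $s$-history per $t$-history is not enough to define $\sch'(\pat')$ and $\env'(\pat',\cdot)$. Distinct $s$-histories sharing the class trace of $\pat'$ may be assigned quite different combined steps by $\sch$ and $\env$, and the step taken at $\pat'$ must match their conditional \emph{mixture}, weighted by the conditional probabilities of those $s$-histories given the class trace — matching the step of one chosen representative lets the class-trace masses drift apart after a single step. This is exactly where you need that the set of combined successor distributions $\{\nu \mid \transition{t}{\nu}\} = \convexComb\{\uncertainty{t}{a} \mid a\in\actionSet\}$ is convex (each $\uncertainty{t}{a}$ is closed under convex combinations, and the scheduler may randomise over actions), so that a mixture of matched steps can again be realised by one decision of {\Sched} and {\Envir} at $\last{\pat'}$; your measurability remark then applies to this conditional, aggregated construction rather than to a path-by-path assignment. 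With the invariant restated at the level of class traces and the matching step defined through this aggregation, the remainder of your plan — in particular the treatment of $\nxt$, $\unt^{\leq k}$ and $\unt$, and the transfer of the universally quantified threshold condition — is sound.
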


Dually, the non-determinism could also be resolved \emph{existentially}. This corresponds to the setting where we want to synthesise both the scheduler $\sch$ that controls the system and choice of feasible probability distributions $\env$ such that $\sch$ and $\env$ together guarantee a specified behaviour $\varphi$. This setting is formalised by the satisfaction relation $\models_\exists$ which is defined like $\models_\un$ except for the operator $\pctlP_{\J p} (\psi)$ where we set
\begin{align*}
 s \;\models_\ex\; \pctlP_{\J p} (\psi) 
\quad\mbox{ if }\quad 
\exists \sch\in\Sch \;\; \exists \env\in\Env: \quad
\prob^{\sch,\env}_s \left[ \models_\ex \psi \right] \J p.
\tag*{$\ex$}
\end{align*}
Note that for any formula of the form $\pctlP_{< p} (\psi)$, we have $s \models_\exists \pctlP_{< p} (\psi)$ if and only if we have $s \models_\un \neg \pctlP_{\geq p} (\psi)$. This can be easily generalised: for each state formula $\varphi$ we obtain a state formula $\overline{\varphi}$ such that $s \models_\exists \varphi$ if and only if $s \models_\un \overline{\varphi}$ for each state $s$. Hence $\bisim_\un$ also preserves $\models_\exists$.
\begin{corollary}\label{cor:existential}
 For states $s \strongBisim_\un t$ and any PCTL formula $\varphi$, we have $s \models_\ex \varphi$ if and only if $t \models_\exists \varphi$.
\end{corollary}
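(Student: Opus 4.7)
The proof plan is to formalize and prove by structural induction the claim already sketched in the paragraph preceding the corollary, namely that the existential semantics $\models_\exists$ can be reduced to the universal semantics $\models_\un$ via a syntactic dualisation $\varphi \mapsto \overline{\varphi}$, and then to invoke Theorem~\ref{thm:coop-bisim}.

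First I would define $\overline{\varphi}$ inductively: set $\overline{\true} = \true$ and $\overline{x} = x$ for $x \in \APSet$; propagate through Boolean connectives by $\overline{\neg \varphi} = \neg \overline{\varphi}$ and $\overline{\varphi_1 \wedge \varphi_2} = \overline{\varphi_1} \wedge \overline{\varphi_2}$; dualise each probabilistic operator by flipping the comparison and negating, for instance
\[
\overline{\pctlP_{< p}(\psi)} = \neg \pctlP_{\geq p}(\overline{\psi}), \qquad \overline{\pctlP_{\leq p}(\psi)} = \neg \pctlP_{> p}(\overline{\psi}),
\]
and symmetrically for $\geq$ and $>$; and pass the dualisation inside path formulas by $\overline{\nxt \varphi} = \nxt \overline{\varphi}$, $\overline{\varphi_1 \unt \varphi_2} = \overline{\varphi_1} \unt \overline{\varphi_2}$, and similarly for the bounded until.

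Next I would prove by induction on the structure of $\varphi$ that, for every state $s$, $s \models_\exists \varphi$ iff $s \models_\un \overline{\varphi}$. The atomic and Boolean cases are immediate from the definitions. For the probabilistic operator, the key observation is that for any comparison $\J$, negating ``$\exists \sch\, \exists \env :\; \prob^{\sch,\env}_s[\,\models_\ex\psi\,] \J p$'' produces ``$\forall \sch\, \forall \env :\; \prob^{\sch,\env}_s[\,\models_\ex\psi\,] \,\overline{\J}\, p$'' where $\overline{\J}$ is the complementary comparison; combining this with the inductive hypothesis applied pointwise to the state subformulas inside $\psi$ (which shows that $\models_\ex \psi$ and $\models_\un \overline{\psi}$ denote the same measurable set of paths) yields the desired equivalence. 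The induction on path formulas reduces to this pointwise replacement of state subformulas, since the until and next combinators are defined solely in terms of which states satisfy the subformulas.

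Finally, the corollary follows in one line: given $s \bisim_\un t$ and a PCTL state formula $\varphi$, we have $s \models_\ex \varphi$ iff $s \models_\un \overline{\varphi}$ iff (by Theorem~\ref{thm:coop-bisim} applied to the PCTL formula $\overline{\varphi}$) $t \models_\un \overline{\varphi}$ iff $t \models_\ex \varphi$.

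The main obstacle is purely bookkeeping: one must verify that the inductive replacement of subformulas preserves measurability and that the rewriting commutes correctly with the scheduler/nature quantifiers. Once the dualisation is set up carefully, no new probabilistic content is needed beyond Theorem~\ref{thm:coop-bisim}; the corollary is essentially a duality reduction.
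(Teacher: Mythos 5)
Your proposal is correct and follows essentially the same route as the paper: the paper's argument is precisely the dualisation $\varphi \mapsto \overline{\varphi}$ (flip the comparison and negate the probabilistic operator, recurse through Boolean and path connectives) reducing $\models_\ex$ to $\models_\un$, followed by an application of Theorem~\ref{thm:coop-bisim}. Your structural induction just makes explicit the generalisation the paper states informally, so there is no substantive difference.
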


\subsection{Competitive resolution of non-determinism}

As already argued for in Section~\ref{sec:Introduction}, there are applications where it is natural to interpret the two sources of non-determinism in a competitive way.

\paragraph{Control synthesis under uncertainty}

In this setting we search for a scheduler $\sigma$ such that for any nature $\pi$, a fixed property $\varphi$ is satisfied. This corresponds to the satisfaction relation $\models_\controlSyn$, obtained similarly from $\models_\un$ by replacing the rule $\un$ with
\begin{align*}
 s \;\models_\controlSyn\; \pctlP_{\J p} (\psi) 
\quad\mbox{ if }\quad 
\exists \sch\in\Sch \;\; \forall \env\in\Env: \quad
\prob^{\sch,\env}_s \left[ \models_\controlSyn \psi \right] \J p.
\tag*{$\controlSyn$}
\end{align*}

As regards bisimulation, the competitive setting is not a common one. We define a bisimulation similar to the alternating bisimulation of~\cite{DBLP:conf/concur/AlurHKV98} applied to non-stochastic two-player games. For a decision $\gd \in \distrib{\actionSet}$ of \Sched, let us denote by $s \tra{\gd} \mu$ that $\mu$ is a possible successor distribution, i.e. there are decisions $\mu_a$ of {\Envir} for each $a$ such that $\mu = \sum_{a\in\actionSet} \gd(a) \cdot \mu_a$. 
\begin{definition}\label{def:bisim-control-syn}
Let $R \subseteq \stateSet \times \stateSet$ be an equivalence relation. We say that $R$ is \emph{probabilistic $\controlSyn$-bisimulation} if for any $(s,t) \in R$ we have that $\APLabelling(s) = \APLabelling(t)$ and
\begin{align*}
\text{for each} \;\;\;& \gd_s \in \distrib{\actionSet} \\
\text{there is} \;\;\;& \gd_t \in \distrib{\actionSet}\\
\text{such that for each} \;\;\;& \transitionAction{t}{\gd_t}{\nu} \\
\text{there is} \;\;\;& \transitionAction{s}{\gd_s}{\mu} \;\;\; \text{such that} \;\; \text{$\mu(\equivclass) =  \nu(\equivclass)$ for each equivalence class $\equivclass \in \partitionset{\stateSet}$}.
\end{align*}
Furthermore, we write $s \strongBisim_\controlSyn t$ if there is a probabilistic $\controlSyn$-bisimulation $R$ such that $(s,t) \in R$.
\end{definition}
The exact alternation of quantifiers might be counter-intuitive at first sight. Note that it exactly corresponds to the situation in non-stochastic games~\cite{DBLP:conf/concur/AlurHKV98} and that this bisimulation preserves the PCTL logic with $\models_\controlSyn$.

\begin{theorem}\label{thm:preserve-control-syn}
For states $s \strongBisim_\controlSyn t$ and any PCTL formula $\varphi$, we have $s \models_\controlSyn \varphi$ if and only if $t \models_\controlSyn \varphi$.
\end{theorem}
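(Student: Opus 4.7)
The plan is to proceed by structural induction on $\varphi$, with the probabilistic operator as the only non-routine case. The base case ($\true$ and atomic propositions $x \in AP$) is immediate because $\APLabelling(s) = \APLabelling(t)$ whenever $s \strongBisim_\controlSyn t$, and the cases for $\neg$ and $\wedge$ are routine. For the probabilistic operator $\pctlP_{\J p}(\psi)$, the induction hypothesis (applied to subformulas occurring in $\psi$) lets us assume that $\psi$ is $\strongBisim_\controlSyn$-invariant, so the measurable set $\models_\controlSyn \psi$ is a union of equivalence-class-indexed cones and all that remains is to show that for bisimilar states the optimal lower probability of this set under the $\exists\sch\forall\env$ quantification is the same.

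For that key step I would prove the following simulation property: whenever $s \strongBisim_\controlSyn t$, then for every scheduler $\sch_s \in \Sch$ there exists a scheduler $\sch_t \in \Sch$ such that for every nature $\env_t \in \Env$ there exists a nature $\env_s \in \Env$ with $\prob^{\sch_s,\env_s}_s[\Pi] = \prob^{\sch_t,\env_t}_t[\Pi']$ for all measurable bisimulation-closed events $\Pi$ (where $\Pi'$ is the corresponding event of paths from $t$ visiting the same equivalence classes in the same order). The construction is coinductive along finite paths: given finite paths $\pat_s$ and $\pat_t$ of equal length whose corresponding states are pairwise bisimilar, set $\gd_s = \sch_s(\pat_s)$; invoke the bisimulation definition to obtain the matching $\gd_t$ and define $\sch_t(\pat_t) = \gd_t$. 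Now the adversarial $\env_t$ produces some $\nu$ with $t \tra{\gd_t} \nu$; again by the bisimulation definition we obtain $\mu$ with $s \tra{\gd_s} \mu$ and $\mu(\equivclass) = \nu(\equivclass)$ for every equivalence class $\equivclass$. Writing $\mu = \sum_a \gd_s(a) \mu_a$ with $\mu_a \in \uncertainty{s}{a}$, we define $\env_s(\pat_s,a) = \mu_a$. The per-step equality $\mu(\equivclass) = \nu(\equivclass)$ is exactly what is required to extend the measure equality on cylinders of length $n$ to cylinders of length $n+1$, yielding the desired equality on the generated $\sigma$-algebra by Carath\'eodory extension.

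With this simulation lemma in hand, the case $\pctlP_{\geq p}(\psi)$ is discharged as follows: if $s \models_\controlSyn \pctlP_{\geq p}(\psi)$ via witness $\sch_s$, take the matching $\sch_t$; for any $\env_t$ we obtain $\env_s$ with matching probability, and since $\prob^{\sch_s,\env_s}_s[\models_\controlSyn \psi] \geq p$ we get $\prob^{\sch_t,\env_t}_t[\models_\controlSyn \psi] \geq p$, so $t \models_\controlSyn \pctlP_{\geq p}(\psi)$. The converse direction uses that $R = \mathord{\strongBisim_\controlSyn}$ is an equivalence relation, hence $(t,s) \in R$ and the same argument applies with the roles swapped. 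The other comparisons $\J \in \{>, \leq, <\}$ are handled analogously by taking suprema/infima over the remaining quantifier.

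The main obstacle I anticipate is handling the alternating quantifier structure correctly. Unlike the cooperative bisimulation $\strongBisim_\un$, where one matches single transitions $s \trans \mu$ by $t \trans \nu$, here the order is $\forall \gd_s \exists \gd_t \forall \nu \exists \mu$, which mirrors $\exists\sch\forall\env$ in the logic but only if one is careful about the order in which the scheduler of $t$ and the nature of $s$ are constructed. The construction has to commit to $\sch_t(\pat_t)$ before seeing the adversarial response $\env_t$, which is precisely why the definition places the $\gd_t$ quantifier before the $\nu$ quantifier; verifying that the construction yields a \emph{measurable} scheduler and nature (rather than one that non-measurably depends on the adversary's future choices) is the technical crux, but can be avoided entirely by working cylinder-by-cylinder and invoking the extension theorem, as in Definition~\ref{def:scheduler}.
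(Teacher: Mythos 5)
The overall plan --- structural induction on $\varphi$ plus a simulation lemma of the shape ``$\forall \sch_s\,\exists \sch_t\,\forall \env_t\,\exists \env_s$ with equal probabilities of all $R$-closed events'' --- is the right shape, and if that lemma held as stated it would indeed discharge the probabilistic operator for every $\J$ (note that since the lemma gives an \emph{equality} of probabilities, no suprema or infima are needed for $\le,<$). The gap is in the proof of the lemma itself. Your coinductive construction presumes a path-to-path correspondence: ``given finite paths $\pat_s$ and $\pat_t$ of equal length whose corresponding states are pairwise bisimilar, set $\gd_s=\sch_s(\pat_s)$ \ldots{} define $\sch_t(\pat_t)=\gd_t$.'' But the matching condition in Definition~\ref{def:bisim-control-syn} is only class-wise, $\mu(\equivclass)=\nu(\equivclass)$; it does not couple individual successor states. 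Consequently a single $t$-path $\pat_t$ corresponds in general to a whole weighted family of $s$-paths (all paths visiting the same classes), on which $\sch_s$ may prescribe \emph{different} decisions. To make the class-sequence measures agree, $\sch_t(\pat_t)$ must match the conditional \emph{mixture} $\sum_i w_i\,\sch_s(\pat_s^i)$ of those decisions, where the weights $w_i$ are the conditional probabilities of the $s$-paths given the class history. Those weights are determined by the choices of $\env_s$ (how the matching $\mu$'s distribute mass \emph{inside} each class), and in your construction $\mu$ --- hence $\env_s$, hence the weights --- is chosen only \emph{after} seeing $\env_t$'s response $\nu$. So the decision you need for $\sch_t(\pat_t)$ depends on $\env_t$, which contradicts exactly the quantifier order you (correctly) insist on in your last paragraph: $\sch_t$ must be committed before the adversarial nature is revealed. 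Nor can one simply fix a canonical within-class split to make the weights nature-independent: the polytope constraints on $\mu$ under $\gd_s$ may force the within-class ratios to vary with $\nu(\equivclass)$. Matching a single representative $s$-path instead of the mixture also fails, because a response $\nu$ under the chosen $\gd_t$ need not be class-wise realisable from the other bisimilar $s$-states under their (different) scheduled decisions.

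So the ``technical crux'' is not measurability (with finite $\stateSet$ and $\actionSet$ the path space is discrete at every finite level and the extension theorem is routine), but the well-definedness of $\sch_t$ in the presence of this mixture/weight circularity; working ``cylinder-by-cylinder'' does not remove it. Closing the gap needs a genuinely additional idea, e.g.\ strengthening the one-step matching to a statement that is uniform in the unknown weights (a single $\gd_t$ whose every response is class-wise matchable from \emph{each} bisimilar $s$-state under its scheduled decision, or a coupling invariant on conditional path distributions given class sequences), or abandoning the path-wise simulation altogether in favour of a value-based argument: show by robust dynamic programming that $\sup_{\sch}\inf_{\env}\prob^{\sch,\env}_s[\models_\controlSyn\psi]$ is constant on bisimulation classes (one-step Bellman operator preserved by the relation, induction on the horizon for $\unt^{\le k}$, limits for $\unt$), together with attainment of the optimum in finite \IMDP{}s so that the $\exists\sch\forall\env$ threshold statements transfer. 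Since the paper defers its detailed proof to the technical report, I cannot certify which route the authors take, but as written your construction of $\sch_t$ does not go through.
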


Similarly to Corollary~\ref{cor:existential}, we could define a satisfaction relation with the alternation $\forall \sch\in\Sch \; \exists \env\in\Env$ that is then preserved by the same bisimulation $\bisim_\controlSyn$. However, we see no natural application thereof.

\paragraph{Parameter synthesis in parallel systems}

In this setting, we search for a resolution $\pi$ of the underspecified probabilities such that for any scheduler $\sigma$ resolving the interleaving non-determinism, a fixed property $\varphi$ is satisfied. This corresponds to the satisfaction relation $\models_\paramSyn$, obtained similarly from $\models_\un$ by replacing the rule $\un$ with
\begin{align*}
 s \;\models_\paramSyn\; \pctlP_{\J p} (\psi) 
\quad\mbox{ if }\quad 
\exists \env\in\Env \;\; \forall \sch\in\Sch: \quad
\prob^{\sch,\env}_s \left[ \models_\paramSyn \psi \right] \J p.
\tag*{$\paramSyn$}
\end{align*}
This yields a definition of bisimulation similar to Definition~\ref{def:bisim-control-syn}. For a choice $(\mu_a)_{a\in\actionSet}$ of underspecified probabilities, let us denote by $s \tra{(\mu_a)\;\;} \mu$ that $\mu$ is a possible successor distribution, i.e. there is a decision $\gd$ of {\Sched} such that 
$\mu = \sum_{a\in\actionSet} \gd(a) \cdot \mu_a$.

\begin{definition}\label{def:bisim-parameter-syn}
Let $R \subseteq \stateSet \times \stateSet$ be a symmetric relation. We say that $R$ is \emph{probabilistic $\paramSyn$-bisimulation} if for any $(s,t) \in R$ we have that $\APLabelling(s) = \APLabelling(t)$ and
\begin{align*}
\text{for each} \;\;\;& (\mu_a)_{a\in\actionSet} \;\;\; \text{where $s \tra{a} \mu_a \;$ for each $a\in\actionSet$}\\
\text{there is} \;\;\;& (\nu_a)_{a\in\actionSet} \;\;\; \text{where $t \tra{a} \nu_a \;$ for each $a\in\actionSet$} \\
\text{such that for each} \;\;\;& t \tra{(\nu_a)\;\;} \nu  \\
\text{there is} \;\;\;& s \tra{(\mu_a)\;\;} \mu \;\;\; \text{such that} \;\; \;\; \text{$\mu(\equivclass) =  \nu(\equivclass)$ for each equivalence class $\equivclass \in \partitionset{\stateSet}$},
\end{align*}
Furthermore, we write $s \strongBisim_\paramSyn t$ if there is a probabilistic $\paramSyn$-bisimulation $R$ such that $(s,t) \in R$.
\end{definition}
The fact that this bisimulation preserves $\models_\paramSyn$ can be
proven analogously to Theorem~\ref{thm:preserve-control-syn}.

\begin{theorem}~\label{thm:preserve-paramSyn}
For states $s \strongBisim_\paramSyn t$ and any PCTL formula $\varphi$, we have $s \models_\paramSyn \varphi$ if and only if $t \models_\paramSyn \varphi$.
\end{theorem}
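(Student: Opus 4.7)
The plan is to mimic the proof of Theorem~\ref{thm:preserve-control-syn} but with the quantifier alternation reversed and with the nature moving first. I would proceed by structural induction on the PCTL state formula $\varphi$. The cases for atomic propositions, negation, and conjunction are immediate: atomic propositions are handled because $s \strongBisim_\paramSyn t$ forces $\APLabelling(s) = \APLabelling(t)$, and the Boolean cases follow directly from the induction hypothesis.

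The interesting case is $\varphi = \pctlP_{\J p}(\psi)$. The key step is to establish the following matching lemma: whenever $s \strongBisim_\paramSyn t$, for every nature $\env_s \in \Env$ used from $s$ there exists a nature $\env_t \in \Env$ from $t$ such that for every scheduler $\sch_t$ on $t$ there is a scheduler $\sch_s$ on $s$ with $\prob^{\sch_s,\env_s}_s[\models_\paramSyn \psi] = \prob^{\sch_t,\env_t}_t[\models_\paramSyn \psi]$ (and symmetrically, swapping the roles of $s$ and $t$). Assuming this lemma, the equivalence $s \models_\paramSyn \pctlP_{\J p}(\psi) \iff t \models_\paramSyn \pctlP_{\J p}(\psi)$ follows by the alternation $\exists\env\forall\sch$ in the $\paramSyn$-semantics.

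To prove the matching lemma, I would first lift the bisimulation to distributions: if $\mu(\equivclass) = \nu(\equivclass)$ for every class $\equivclass \in \partitionset{\stateSet}$, then each $\mu$-successor state can be coupled with a bisimilar $\nu$-successor state. Using Definition~\ref{def:bisim-parameter-syn}, for each finite path $\pat$ ending in a state $s'$ of an $s$-path, with $\env_s$ already fixed to produce the action-tuple $(\mu_a)_{a\in\actionSet}$, the bisimulation hands us a matching tuple $(\nu_a)_{a\in\actionSet}$ at the bisimilar state $t'$; define $\env_t$ at the corresponding $t$-path to output this tuple. This defines $\env_t$ inductively along coupled path prefixes. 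For the inner $\forall\sch_t \exists\sch_s$, given a decision $\gd_t$ of $\sch_t$ producing $\nu = \sum_a \gd_t(a)\nu_a$ at $t'$, Definition~\ref{def:bisim-parameter-syn} yields a decision $\gd_s$ at $s'$ with $\sum_a \gd_s(a)\mu_a = \mu$ satisfying $\mu(\equivclass) = \nu(\equivclass)$ on all classes; set $\sch_s$ to use $\gd_s$. Extending this step-by-step construction to all finite paths yields full $\env_t$ and $\sch_s$, and by induction the step-measures on equivalence classes agree on cylinders.

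The main obstacle will be the measure-theoretic bookkeeping: one must verify that the path-indexed constructions of $\env_t$ and $\sch_s$ are consistent (different scheduler choices on $t$ may induce different reachable paths, so $\sch_s$ must be defined as a function of the full history) and that the resulting measures on $(\infpth,\sigmaalgebra)$ assign equal probability to every cylinder whose projection to equivalence classes is fixed. From this cylinder-level equality, the induction hypothesis applied to $\psi$ (for $\nxt\varphi'$, $\varphi_1 \unt^{\leq k}\varphi_2$ by induction on $k$, and for $\varphi_1 \unt \varphi_2$ by continuity of measure) shows $\prob^{\sch_s,\env_s}_s[\models_\paramSyn \psi] = \prob^{\sch_t,\env_t}_t[\models_\paramSyn \psi]$, completing the inductive step.
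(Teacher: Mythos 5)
Your proposal matches the paper's intended argument: the paper establishes this theorem exactly as you propose, namely ``analogously to Theorem~\ref{thm:preserve-control-syn}'' by structural induction on $\varphi$, where the $\pctlP_{\J p}(\psi)$ case is handled by a nature-first, scheduler-second matching construction that mirrors the quantifier alternation $\exists\env\forall\sch$ built into Definition~\ref{def:bisim-parameter-syn}. The measure-theoretic bookkeeping you flag (consistency of the path-indexed definitions of $\env_t$ and $\sch_s$ and equality of the induced measures on class cylinders, using convexity of the sets $\uncertainty{s}{a}$ so that nature need not randomize) is precisely what the detailed proof deferred to the technical report carries out, so your sketch is essentially the same route.
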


As a final result of this section, we show that 
these two bisimulations 
coincide. 

\begin{theorem}\label{thm:coinsideness}
We have $\strongBisim_\controlSyn \;\; = \;\; \strongBisim_\paramSyn$.
\end{theorem}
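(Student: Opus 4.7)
The plan is to prove both implications $s \strongBisim_\controlSyn t \Rightarrow s \strongBisim_\paramSyn t$ and $s \strongBisim_\paramSyn t \Rightarrow s \strongBisim_\controlSyn t$. For each direction, the strategy is to take a witness bisimulation $R$ of one kind and verify that $R$ also satisfies the matching condition of the other kind; throughout, distributions are compared modulo the $R$-equivalence classes, writing $\mu(\equivclass)$ for $\sum_{s'\in\equivclass}\mu(s')$.

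For $\strongBisim_\controlSyn \subseteq \strongBisim_\paramSyn$: let $R$ be a $\controlSyn$-bisimulation, $(s,t) \in R$, and fix any Nature choice $(\mu_b)_b$ with $s \tra{b} \mu_b$. The key trick is to apply the $\controlSyn$ condition to the symmetric pair $(t,s) \in R$ using each \emph{pure} scheduler $\delta_a$ at $t$: this yields $\gd_{s,a} \in \distrib{\actionSet}$ such that every distribution $\sum_b \gd_{s,a}(b) \mu'_b$ (with $s \tra{b} \mu'_b$) is matched, on $R$-classes, by some $\nu \in \uncertainty{t}{a}$. Specializing to the fixed $(\mu_b)$ produces the required $\nu_a$ with $t \tra{a} \nu_a$ satisfying $\nu_a(\equivclass) = \sum_b \gd_{s,a}(b)\mu_b(\equivclass)$ for every class $\equivclass$. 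Then, for any scheduler $\gd$ at $t$, the choice $\gd'(b) := \sum_a \gd(a)\gd_{s,a}(b)$ is a distribution at $s$ yielding $\sum_a \gd(a)\nu_a(\equivclass) = \sum_b \gd'(b)\mu_b(\equivclass)$ on every class, which establishes the $\paramSyn$ matching.

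For $\strongBisim_\paramSyn \subseteq \strongBisim_\controlSyn$: let $R$ be a $\paramSyn$-bisimulation, $(s,t) \in R$, and fix $\gd_s \in \distrib{\actionSet}$. We must construct a single $\gd_t$ so that every Nature response $\sum_a \gd_t(a) \nu_a$ at $t$ is matched on classes by some $\sum_b \gd_s(b) \mu_b$ at $s$. Dually, the $\paramSyn$ condition applied to $(t,s)$ at each $(\nu_a)_a$ yields $(\mu_b^{(\nu)})_b$ at $s$ with each $\mu_b^{(\nu)}$ expressible on classes as a convex combination $\sum_a \alpha_{b,a}^{(\nu)}\nu_a$. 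Setting $\gd_t(a) := \sum_b \gd_s(b)\,\alpha_{b,a}$ would give the desired equality $\sum_a \gd_t(a)\nu_a(\equivclass) = \sum_b \gd_s(b) \mu_b^{(\nu)}(\equivclass)$ on each class, provided the coefficients $\alpha$ can be selected independently of $(\nu_a)$.

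The main obstacle is precisely this uniform selection: the $\paramSyn$ condition a priori produces an $(\nu_a)$-dependent response, while $\controlSyn$ demands a $\gd_t$ independent of the adversarial $(\nu_a)$. The plan is to resolve this via a minimax/separation argument over the convex compact uncertainty polytopes: the required polytope inclusion $\sum_a \gd_t(a)\,\uncertainty{t}{a} \subseteq \sum_b \gd_s(b)\,\uncertainty{s}{b}$ (on classes) reduces, by considering supporting hyperplanes, to scalar inequalities $\sum_a \gd_t(a) M_t(\ell,a) \leq \sum_b \gd_s(b) M_s(\ell,b)$ for every linear functional $\ell$, where $M_t(\ell,a)$ denotes the maximum of $\ell$ over the projection of $\uncertainty{t}{a}$ to $R$-classes. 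The $\paramSyn$-bisimulation ensures these constraints are simultaneously feasible, yielding a uniform $\gd_t$ via a standard application of von Neumann's minimax theorem.
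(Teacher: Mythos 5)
Your first direction is correct and complete as sketched: applying Definition~\ref{def:bisim-control-syn} to the pair $(t,s)$ with each Dirac scheduler $\delta_a$, specialising the resulting matching to the fixed Nature tuple $(\mu_b)_b$ to obtain $\nu_a$ with $\nu_a(\equivclass)=\sum_b \gd_{s,a}(b)\mu_b(\equivclass)$, and then composing $\gd'(b)=\sum_a\gd(a)\gd_{s,a}(b)$ is exactly the right computation, and it only needs equality on $R$-classes. (The paper itself defers the detailed proof to the technical report, so there is no in-paper argument to compare line by line; but note that its treatment of the competitive case in Section~\ref{sec:deciding_strong_bisim} runs through the characterisation by strictly minimal polytopes, which is the kind of structural tool your second direction is missing.)

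The second direction has a genuine gap, and it sits exactly at the step you flag. Reducing the required inclusion $\sum_a\gd_t(a)\,\uncertainty{t}{a}\subseteq\sum_b\gd_s(b)\,\uncertainty{s}{b}$ (projected onto classes) to support-function constraints is fine, but the existence of a \emph{single} $\gd_t$ satisfying all of them is precisely the statement to be proven; writing that ``the $\paramSyn$-bisimulation ensures these constraints are simultaneously feasible'' asserts the conclusion rather than deriving it. The minimax swap is not a standard application of von Neumann's theorem: the expression is linear in $\gd_t$, but as a function of $\ell$ it is a difference of support functions, hence neither concave nor quasi-concave, so neither von Neumann nor Sion yields $\min_{\gd_t}\sup_\ell=\sup_\ell\min_{\gd_t}$; equivalently, after restricting to the facets of the target polytope, infeasibility is witnessed by Farkas certificates that aggregate several directions $\ell$, and per-direction feasibility (which you also do not actually derive from the $\paramSyn$ condition) does not refute them. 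More fundamentally, what your argument extracts from the $\paramSyn$ condition applied to $(t,s)$ is only: for every Nature tuple $(\nu_a)_a$ at $t$ there is a \emph{tuple-dependent} $\rho$ with $\sum_a\rho(a)\nu_a$ in the target, and uniformising such a $\rho$ is false on convexity and compactness grounds alone. For instance, with two classes, if the projections of $\uncertainty{t}{a_1}$ and $\uncertainty{t}{a_2}$ give the first class probability in $[0.1,0.2]$ and $[0.8,0.9]$ respectively and the target is the single distribution $(0.5,0.5)$, every pair of choices can be mixed to hit the target, yet every fixed mixture of the two polytopes is an interval of width $0.1$ and is never contained in it. So the uniform $\gd_t$ must come from exploiting the $\paramSyn$ hypothesis in both orientations in a genuinely global way (e.g.\ via the strictly minimal polytope sets), and your sketch gives no indication of how the hypothesis enters the minimax argument to exclude such configurations. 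A further, smaller point: Definition~\ref{def:bisim-control-syn} requires an equivalence relation while a $\paramSyn$-bisimulation is only assumed symmetric, so this direction also needs an argument (e.g.\ that the largest $\paramSyn$-bisimulation is an equivalence) before the transfer condition can be checked with respect to $\partitionset{\stateSet}$.
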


Thanks to this result, we denote from now on these coinciding
bisimulations by $\strongBisim_\synthesis$. As a concluding remark,
note that
Definitions~\ref{def:bisim-cooperative},~\ref{def:bisim-control-syn}
and~\ref{def:bisim-parameter-syn} can be seen as the conservative
extension of probabilistic bisimulation for (state-labelled) MDPs. To
see that assume the set of uncertainty for every transition is a
singleton.  Since there is only one choice for the nature, the role of
nature can be safely removed from the definitions. Moreover, it is
worthwhile to note that
Theorems~\ref{thm:coop-bisim},~\ref{thm:preserve-control-syn}
and~\ref{thm:preserve-paramSyn} show the soundness of the
probabilistic bisimulation definitions with respect to
PCTL. Unfortunately, it is shown in~\cite{Segala-thesis, SL94} that
probabilistic bisimulation for probabilistic automata is finer than
PCTL equivalence which leads to the incompleteness in general. Since
MDPs can be seen as a subclass of PAs, it is not difficult to see that
the incompleteness holds also for MDPs.

We also remark that the notions $\strongBisim_\un$ and
$\strongBisim_\synthesis$ are incomparable, as it is for instance
observable in Example~\ref{ex:intro}.  It is shown in the example that
$t\strongBisim_\un\overline{t}$ and
$u\strongBisim_\synthesis\overline{u}$. However it is not hard to
verify that $t\not\!\strongBisim_\synthesis\overline{t}$ and
$u\not\!\strongBisim_\un\overline{u}$.  For the latter, notice that
for example $u$ can evolve to $r$ with probability one by taking
action $b$, whereas $\overline{u}$ cannot simulate.  The former is
noticeable in the situation where the controller wants to maximise the
probability to reach $r$, but the nature declines.  In this case $t$
chooses action $b$ and the nature let it go to $r$ with probability
$0.8$.  Nevertheless the nature can prevent $\overline{t}$ to evolve
into $r$ with probability more than $0.6$, despite the fact which
action has been chosen by $\overline{t}$.

\section{Algorithms}
\label{sec:deciding_strong_bisim}

In this section, we give algorithms for computing bisimulations $\bisim_\un$ and $\bisim_\synthesis$. We show that computing bisimulations in both cases is fixed-parameter tractable.

\begin{example}
Let us start by illustrating the ideas on Example~\ref{ex:intro} from Section~\ref{sec:Introduction}.
\begin{center}
\usetikzlibrary{calc}

\begin{tikzpicture}[x=2.5cm,y=1.2cm,outer sep=0.5mm,
state/.style={draw,circle, inner sep =0.25em,text centered},
trans/.style={font=\footnotesize},
prob/.style={font=\scriptsize,xscale=0.9}
]

\begin{scope}[xshift=14em]

\begin{scope}
\node[state] (top) at (0,0) {$t$};
\node[state] (left) at (-0.3,-2) {$\ell$};
\node[state] (right) at (0.3,-2) {$r$};

\path[->] (top) edge[in=90,out=180,looseness=1] node[above left=-4pt,pos=0.1,trans]{$a$} node[pos=0.4,name=topleft,inner sep=0,outer sep=0]{} node[above=-3pt,sloped,pos=0.8,prob] {$[0.1,0.3]$} (left);

\path[->] (top) edge[in=90,out=0,looseness=1] node[above right=-4pt,pos=0.1,trans]{$b$} node[pos=0.4,name=topright,inner sep=0,outer sep=0]{} node[above=-3pt,pos=0.8,prob,sloped] {$[0,1]$} (right);

\path[->] (topleft) edge[bend right=5] node[above=-3pt,pos=0.2,prob,sloped] {$[0.8,1]$} (right);
\path[->] (topright) edge[bend left=5] node[above=-3pt,pos=0.2,prob,sloped] {$[0.2,0.6]$} (left);
\end{scope}

\begin{scope}[xshift=8em]
\node[state] (top) at (0,0) {$\overline{t}$};
\node[state] (left) at (-0.3,-2) {$\ell$};
\node[state] (right) at (0.3,-2) {$r$};

\path[->] (top) edge[in=90,out=180,looseness=1] node[above left=-4pt,pos=0.1,trans]{$c$} node[pos=0.4,name=topleft,inner sep=0,outer sep=0]{} node[above=-3pt,sloped,pos=0.8,prob] {$[0.1,1]$} (left);

\path[->] (top) edge[in=90,out=0,looseness=1] node[above right=-4pt,pos=0.1,trans]{$d$} node[pos=0.4,name=topright,inner sep=0,outer sep=0]{} node[above=-3pt,pos=0.8,prob,sloped] {$[0,0.8]$} (right);

\path[->] (topleft) edge[bend right=5] node[above=-3pt,pos=0.2,prob,sloped] {$[0.4,0.9]$} (right);
\path[->] (topright) edge[bend left=5] node[above=-3pt,pos=0.2,prob,sloped] {$[0.2,0.4]$} (left);
\end{scope}

\draw [dotted] (2.15,0.3) -- (2.15,-2.3);

\end{scope}

\begin{scope}[xshift=34em]

\begin{scope}
\node[state] (top) at (0,0) {$u$};
\node[state] (left) at (-0.3,-2) {$\ell$};
\node[state] (right) at (0.3,-2) {$r$};

\path[->] (top) edge[in=90,out=180,looseness=1] node[above left=-4pt,pos=0.1,trans]{$a$} node[pos=0.4,name=topleft,inner sep=0,outer sep=0]{} node[above=-3pt,sloped,pos=0.8,prob] {$[0.1,0.6]$} (left);

\path[->] (top) edge[in=90,out=0,looseness=1] node[above right=-4pt,pos=0.1,trans]{$b$} node[pos=0.4,name=topright,inner sep=0,outer sep=0]{} node[above=-3pt,pos=0.8,prob,sloped] {$[0,1]$} (right);

\path[->] (topleft) edge[bend right=5] node[above=-3pt,pos=0.2,prob,sloped] {$[0,1]$} (right);
\path[->] (topright) edge[bend left=5] node[above=-3pt,pos=0.2,prob,sloped] {$[0,0.6]$} (left);
\end{scope}

\begin{scope}[xshift=8em]
\node[state] (top) at (0,0) {$\overline{u}$};
\node[state] (left) at (-0.3,-2) {$\ell$};
\node[state] (right) at (0.3,-2) {$r$};

\path[->] (top) edge[in=90,out=180,looseness=1] node[above left=-4pt,pos=0.1,trans]{$a$} node[pos=0.4,name=topleft,inner sep=0,outer sep=0]{} node[above=-3pt,sloped,pos=0.8,prob] {$[0.1,0.6]$} (left);

\path[->] (top) edge[in=90,out=0,looseness=1] node[above right=-4pt,pos=0.1,trans]{$c$} node[pos=0.4,name=topright,inner sep=0,outer sep=0]{} node[above=-3pt,pos=0.8,prob,sloped] {$[0,1]$} (right);

\path[->] (topleft) edge[bend right=5] node[above=-3pt,pos=0.2,prob,sloped] {$[0,1]$} (right);
\path[->] (topright) edge[bend left=5] node[above=-3pt,pos=0.2,prob,sloped] {$[0.1,0.8]$} (left);
\end{scope}


\end{scope}

\end{tikzpicture}

\begin{tikzpicture}[x=2.5cm,y=1.2cm,outer sep=0.5mm,
state/.style={draw,circle, inner sep =0.25em,text centered},
trans/.style={font=\footnotesize},
prob/.style={font=\scriptsize,xscale=0.8,yscale=0.9},
eq/.style={prob,draw,rectangle,inner sep=1pt,yshift=-2pt}
]

\begin{scope}[xshift=14em]

\path[draw,->,ultra thick] (0.55,1.1) -- node[right] {$t \bisim_\un \overline{t}$} (0.55,0.3);

\begin{scope}
\node[state] (top) at (0,0) {$t$};
\node[] (left) at (-0.3,-0.5) {};
\node[] (right) at (0.3,-0.5) {};

\path[->] (top) edge[in=90,out=180,looseness=1] node[above left=-4pt,pos=0.3,trans]{$a$} (left);
\path[->] (top) edge[in=90,out=0,looseness=1] node[above right=-4pt,pos=0.3,trans]{$b$} (right);

\node[eq] at ($(left)+(0,-0.3)$) {$\begin{aligned}
     \ell+r &= 1\\
     0.1 \leq \ell &\leq 0.3 \\
     0.8 \leq r &\leq 1 \\
  \end{aligned}$};

\node[eq] at ($(right)+(-0.06,-0.3)$) {$\begin{aligned}
     \ell+r &= 1\\
     0.2 \leq \ell &\leq 0.6 \\
     0 \leq r &\leq 1 \\
  \end{aligned}$};

\begin{scope}[xshift=-3em,yshift=-7.5em,xscale=0.32,yscale=0.666]
\node [trans] at (0.9,1.3) {$P_1$};

\draw[->] (0,0) -- node[below=-2pt,pos=1.1,prob]{$\ell$} (1.2,0);
\draw[->] (0,0) -- node[left=-2pt,pos=1.1,prob]{$r$} (0,1.2);
\draw[-] (1,-0.04) -- node[below,pos=0.85,prob,scale=0.8]{$1$} (1,0.04);
\draw[-] (-0.04,1) -- node[left,pos=0.85,prob,scale=0.8]{$1$} (0.04,1);

\draw[-,densely dotted, thin] (1,0) -- (0,1);

\foreach \x in {.1,.3} {
	\draw[-] (\x,-0.04) -- node[below,pos=0.85,prob,scale=0.8]{$\x$} (\x,0.04);
	\draw[-,densely dotted, thin] (\x,0) -- (\x,1-\x);
};
\foreach \y in {.8} {
	\draw[-] (-0.04,\y) -- node[left,pos=0.85,prob,scale=0.8]{$\y$} (0.04,\y);
	\draw[-,densely dotted, thin] (0,\y) -- (1-\y,\y);
};

\foreach \a/\b/\c in {0.1/0.2/0.035} {
	\draw [-,thick] (\a,1-\a) -- node [above=0pt,sloped,rotate=-17.5,prob,xscale=0.9] {$[\a,\b]$} (\b,1-\b);
	\draw [-,thick] (\a-\c,1-\a-\c) -- (\a+\c,1-\a+\c);
	\draw [-,thick] (\b-\c,1-\b-\c) -- (\b+\c,1-\b+\c);
};
\end{scope}
\begin{scope}[xshift=0.5em,yshift=-7.5em,xscale=0.32,yscale=0.666]
\node [trans] at (0.9,1.3) {$P_2$};

\draw[->] (0,0) -- node[below=-2pt,pos=1.1,prob]{$\ell$} (1.2,0);
\draw[->] (0,0) -- node[left=-2pt,pos=1.1,prob]{$r$} (0,1.2);
\draw[-] (1,-0.04) -- node[below,pos=0.85,prob,scale=0.8]{$1$} (1,0.04);
\draw[-] (-0.04,1) -- node[left,pos=0.85,prob,scale=0.8]{$1$} (0.04,1);

\draw[-,densely dotted, thin] (1,0) -- (0,1);

\foreach \x in {.2,.6} {
	\draw[-] (\x,-0.04) -- node[below,pos=0.85,prob,scale=0.8]{$\x$} (\x,0.04);
	\draw[-,densely dotted, thin] (\x,0) -- (\x,1-\x);
};
\foreach \y in {} {
	\draw[-] (-0.04,\y) -- node[left,pos=0.85,prob,scale=0.8]{$\y$} (0.04,\y);
	\draw[-,densely dotted, thin] (0,\y) -- (1-\y,\y);
};

\foreach \a/\b/\c in {0.2/0.6/0.035} {
	\draw [-,thick] (\a,1-\a) -- node [above=0pt,sloped,rotate=-17.5,prob,xscale=0.9] {$[\a,\b]$} (\b,1-\b);
	\draw [-,thick] (\a-\c,1-\a-\c) -- (\a+\c,1-\a+\c);
	\draw [-,thick] (\b-\c,1-\b-\c) -- (\b+\c,1-\b+\c);
};
\end{scope}

\end{scope}

\begin{scope}[xshift=8em]
\node[state] (top) at (0,0) {$\overline{t}$};
\node[] (left) at (-0.3,-0.5) {};
\node[] (right) at (0.3,-0.5) {};

\path[->] (top) edge[in=90,out=180,looseness=1] node[above left=-4pt,pos=0.3,trans]{$c$} (left);
\path[->] (top) edge[in=90,out=0,looseness=1] node[above right=-4pt,pos=0.3,trans]{$d$} (right);

\node[eq] at ($(left)+(0.02,-0.3)$) {$\begin{aligned}
     \ell+r &= 1\\
     0.1 \leq \ell &\leq 1 \\
     0.4 \leq r &\leq 0.9 \\
  \end{aligned}$};

\node[eq] at ($(right)+(-0.03,-0.3)$) {$\begin{aligned}
     \ell+r &= 1\\
     0.2 \leq \ell &\leq 0.4 \\
     0 \leq r &\leq 0.8 \\
  \end{aligned}$};

\begin{scope}[xshift=-3em,yshift=-7.5em,xscale=0.32,yscale=0.666]
\node [trans] at (0.9,1.3) {$P_3$};

\draw[->] (0,0) -- node[below=-2pt,pos=1.1,prob]{$\ell$} (1.2,0);
\draw[->] (0,0) -- node[left=-2pt,pos=1.1,prob]{$r$} (0,1.2);
\draw[-] (1,-0.04) -- node[below,pos=0.85,prob,scale=0.8]{$1$} (1,0.04);

\draw[-,densely dotted, thin] (1,0) -- (0,1);

\foreach \x in {.1} {
	\draw[-] (\x,-0.04) -- node[below,pos=0.85,prob,scale=0.8]{$\x$} (\x,0.04);
	\draw[-,densely dotted, thin] (\x,0) -- (\x,1-\x);
};
\foreach \y in {.4,.9} {
	\draw[-] (-0.04,\y) -- node[left,pos=0.85,prob,scale=0.8]{$\y$} (0.04,\y);
	\draw[-,densely dotted, thin] (0,\y) -- (1-\y,\y);
};

\foreach \a/\b/\c in {0.1/0.6/0.035} {
	\draw [-,thick] (\a,1-\a) -- node [above=0pt,sloped,rotate=-17.5,prob,xscale=0.9] {$[\a,\b]$} (\b,1-\b);
	\draw [-,thick] (\a-\c,1-\a-\c) -- (\a+\c,1-\a+\c);
	\draw [-,thick] (\b-\c,1-\b-\c) -- (\b+\c,1-\b+\c);
};
\end{scope}
\begin{scope}[xshift=0.7em,yshift=-7.5em,xscale=0.32,yscale=0.666]
\node [trans] at (0.9,1.3) {$P_4$};

\draw[->] (0,0) -- node[below=-2pt,pos=1.1,prob]{$\ell$} (1.2,0);
\draw[->] (0,0) -- node[left=-2pt,pos=1.1,prob]{$r$} (0,1.2);
\draw[-] (1,-0.04) -- node[below,pos=0.85,prob,scale=0.8]{$1$} (1,0.04);
\draw[-] (-0.04,1) -- node[left,pos=0.85,prob,scale=0.8]{$1$} (0.04,1);

\draw[-,densely dotted, thin] (1,0) -- (0,1);

\foreach \x in {.2,.4} {
	\draw[-] (\x,-0.04) -- node[below,pos=0.85,prob,scale=0.8]{$\x$} (\x,0.04);
	\draw[-,densely dotted, thin] (\x,0) -- (\x,1-\x);
};
\foreach \y in {.8} {
	\draw[-] (-0.04,\y) -- node[left,pos=0.85,prob,scale=0.8]{$\y$} (0.04,\y);
	\draw[-,densely dotted, thin] (0,\y) -- (1-\y,\y);
};

\foreach \a/\b/\c in {0.2/0.4/0.035} {
	\draw [-,thick] (\a,1-\a) -- node [above=0pt,sloped,rotate=-17.5,prob,xscale=0.9] {$[\a,\b]$} (\b,1-\b);
	\draw [-,thick] (\a-\c,1-\a-\c) -- (\a+\c,1-\a+\c);
	\draw [-,thick] (\b-\c,1-\b-\c) -- (\b+\c,1-\b+\c);
};
\end{scope}

\end{scope}

\draw [dotted] (2.155,1.1) -- (2.155,-2.3);

\end{scope}

\begin{scope}[xshift=34em]

\path[draw,->,ultra thick] (0.45,1.1) -- node[right] {$u \bisim_\synthesis \overline{u}$} (0.45,0.3);

\begin{scope}
\node[state] (top) at (0,0) {$u$};
\node[] (left) at (-0.3,-0.5) {};
\node[] (right) at (0.3,-0.5) {};

\path[->] (top) edge[in=90,out=180,looseness=1] node[above left=-4pt,pos=0.3,trans]{$a$} (left);
\path[->] (top) edge[in=90,out=0,looseness=1] node[above right=-4pt,pos=0.3,trans]{$b$} (right);

\node[eq] at ($(left)+(0.02,-0.3)$) {$\begin{aligned}
     \ell+r &= 1\\
     0.1 \leq \ell &\leq 0.6 \\
     0 \leq r &\leq 1 \\
  \end{aligned}$};

\node[eq] at ($(right)+(-0.08,-0.3)$) {$\begin{aligned}
     \ell+r &= 1\\
     0 \leq \ell &\leq 0.6 \\
     0 \leq r &\leq 1 \\
  \end{aligned}$};

\begin{scope}[xshift=-3em,yshift=-7.5em,xscale=0.32,yscale=0.666]
\node [trans] at (0.9,1.3) {$P_1$};


\draw[-,densely dotted, thin] (1,0) -- (0,1);

\foreach \a/\b/\c in {0.1/0.6/0.035} {
	\draw [-,thick] (\a,1-\a) -- node [above=0pt,sloped,rotate=-17.5,prob,xscale=0.9] {$[\a,\b]$} (\b,1-\b);
	\draw [-,thick] (\a-\c,1-\a-\c) -- (\a+\c,1-\a+\c);
	\draw [-,thick] (\b-\c,1-\b-\c) -- (\b+\c,1-\b+\c);
};
\end{scope}
\begin{scope}[xshift=0.3em,yshift=-7.5em,xscale=0.32,yscale=0.666]
\node [trans] at (0.9,1.3) {$P_2$};


\draw[-,densely dotted, thin] (1,0) -- (0,1);

\foreach \a/\b/\c in {0/0.6/0.035} {
	\draw [-,thick] (\a,1-\a) -- node [above=0pt,sloped,rotate=-17.5,prob,xscale=0.9] {$[\a,\b]$} (\b,1-\b);
	\draw [-,thick] (\a-\c,1-\a-\c) -- (\a+\c,1-\a+\c);
	\draw [-,thick] (\b-\c,1-\b-\c) -- (\b+\c,1-\b+\c);
};
\end{scope}

\end{scope}

\begin{scope}[xshift=8em]
\node[state] (top) at (0,0) {$\overline{u}$};
\node[] (left) at (-0.3,-0.5) {};
\node[] (right) at (0.3,-0.5) {};

\path[->] (top) edge[in=90,out=180,looseness=1] node[above left=-4pt,pos=0.3,trans]{$a$} (left);
\path[->] (top) edge[in=90,out=0,looseness=1] node[above right=-4pt,pos=0.3,trans]{$c$} (right);

\node[eq] at ($(left)+(0.02,-0.3)$) {$\begin{aligned}
     \ell+r &= 1\\
     0.1 \leq \ell &\leq 0.6 \\
     0 \leq r &\leq 1 \\
  \end{aligned}$};

\node[eq] at ($(right)+(-0.05,-0.3)$) {$\begin{aligned}
     \ell+r &= 1\\
     0.1 \leq \ell &\leq 0.8 \\
     0 \leq r &\leq 1 \\
  \end{aligned}$};

\begin{scope}[xshift=-3em,yshift=-7.5em,xscale=0.32,yscale=0.666]
\node [trans] at (0.9,1.3) {$P_1$};


\draw[-,densely dotted, thin] (1,0) -- (0,1);

\foreach \a/\b/\c in {0.1/0.6/0.035} {
	\draw [-,thick] (\a,1-\a) -- node [above=0pt,sloped,rotate=-17.5,prob,xscale=0.9] {$[\a,\b]$} (\b,1-\b);
	\draw [-,thick] (\a-\c,1-\a-\c) -- (\a+\c,1-\a+\c);
	\draw [-,thick] (\b-\c,1-\b-\c) -- (\b+\c,1-\b+\c);
};
\end{scope}
\begin{scope}[xshift=0.3em,yshift=-7.5em,xscale=0.32,yscale=0.666]
\node [trans] at (0.9,1.3) {$P_3$};


\draw[-,densely dotted, thin] (1,0) -- (0,1);

\foreach \a/\b/\c in {0.1/0.8/0.035} {
	\draw [-,thick] (\a,1-\a) -- node [above=0pt,sloped,rotate=-17.5,prob,xscale=0.9] {$[\a,\b]$} (\b,1-\b);
	\draw [-,thick] (\a-\c,1-\a-\c) -- (\a+\c,1-\a+\c);
	\draw [-,thick] (\b-\c,1-\b-\c) -- (\b+\c,1-\b+\c);
};
\end{scope}

\end{scope}

\end{scope}

\end{tikzpicture}
\end{center} 
%
%
The general sketch of the algorithm is as follows. We need to construct the polytopes of probability distributions offered by the actions; in our examples the polytopes are just line segments in two-dimensional space. We get $t \bisim_\un \overline{t}$ since the \emph{convex hull} of $P_1$ and $P_2$ equals to the \emph{convex hull} of $P_3$ and $P_4$. Similarly, we get $u \bisim_\synthesis \overline{u}$ since $u$ and $\overline{u}$ have the same set of \emph{minimal} polytopes w.r.t. set inclusion.

\end{example}

%

Let us state the results formally. Let us fix $\imdp = (\stateSet, \actionSet, \APSet,\APLabelling, \intTransitionProbability)$ where $b$ is the maximal number of different actions $\max_{s\in\stateSet} | \{ \intTransitionProbability(s,a,\cdot) \mid a \in\actionSet \} |$, $f$ is the maximal support of an action $\max_{s\in\stateSet, a\in\actionSet} | \{s' \mid \intTransitionProbability(s,a,s') \neq [0,0] \} |$, and $|\imdp|$ denotes the size of the representation using adjacency lists for non-zero elements of $\intTransitionProbability$ where we assume that the interval bounds are rational numbers encoded symbolically in binary.

\begin{theorem}
There is an algorithm that computes $\bisim_\un$ in time polynomial in $|\imdp|$ and exponential in $f$.
There is also an algorithm that computes $\bisim_\synthesis$ in time polynomial in $|\imdp|$ and exponential in $f$ and $b$.
\end{theorem}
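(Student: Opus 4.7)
The plan is to employ a classical partition-refinement procedure in the Kanellakis--Smolka style, lifted to interval MDPs. Starting from the initial partition that groups states by $\APLabelling$, I would iteratively refine: for every block $\equivclass$ and every pair $(s,t) \in \equivclass \times \equivclass$, I would check whether the bisimulation condition of Definition~\ref{def:bisim-cooperative} (for $\strongBisim_\un$) or Definition~\ref{def:bisim-control-syn} (for $\strongBisim_\synthesis$) still holds with respect to the current partition $\partZ$; if not, the block is split accordingly. Since every refinement strictly increases the number of blocks, at most $|\stateSet|$ rounds suffice, each processing $O(|\stateSet|^2)$ pairs, so the overall running time is polynomial in $|\imdp|$ multiplied by the cost of a single pairwise bisimulation check.

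The core subroutine is thus a comparison of polytopes of projected successor distributions. For a fixed state $r$ and action $a$, the feasible set $\uncertainty{r}{a}$ is carved out of the probability simplex by the interval constraints $\intTransitionProbability(r,a,\cdot)$, and its support has size at most $f$; hence it lies in an affine subspace of dimension at most $f$ and has at most $2^f$ vertices, which I can enumerate in time polynomial in $|\imdp|$ and exponential in $f$. Projection onto $\partZ$ by summing probabilities within each class preserves both bounds.

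For $\strongBisim_\un$, the set of cooperative successor distributions is exactly $\convexComb\{\uncertainty{s}{a} \mid a \in \actionSet\}$, so the bisimulation condition reduces to equality of the projected polytopes of $s$ and $t$. Each has at most $b\cdot 2^f$ projected vertices; equality can be decided by testing mutual vertex containment via polynomially sized linear programs, giving a total complexity polynomial in $|\imdp|$ and exponential only in $f$. For $\strongBisim_\synthesis$, I would rely on the coincidence with $\strongBisim_\controlSyn$ established by Theorem~\ref{thm:coinsideness}. Fixing a scheduler distribution $\gd_s$ yields the reachable-distribution set $\sum_{a} \gd_s(a)\cdot \uncertainty{s}{a}$ as a Minkowski sum, and the condition asks, for each $\gd_s$, for a matching $\gd_t$ whose Minkowski sum is contained in that of $s$ modulo $\partZ$. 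It suffices to quantify over a finite representative family of schedulers obtained by ranging over the $2^b-1$ non-empty subsets of $\actionSet$ together with the extremal scheduler weights supported on each subset. For each candidate pair I would compare the projected Minkowski-sum polytopes, whose dimension is bounded by $bf$ and whose vertex count is at most $2^{bf}$, again through linear programming. The resulting complexity is polynomial in $|\imdp|$ and exponential in both $f$ and $b$.

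The main obstacle is keeping the polytope computations within the FPT budget. Naively, the ambient space has dimension $|\stateSet|$, which would preclude any FPT bound. The crucial observation is that every elementary polytope $\uncertainty{r}{a}$ has effective dimension at most $f$, so vertex enumeration, convex-hull construction, and containment testing can all be carried out in that low-dimensional restriction, with the unavoidable $2^f$ blow-up in vertex count. For $\strongBisim_\synthesis$, the additional enumeration over scheduler-support patterns among the $b$ outgoing actions introduces the extra $2^b$ factor, yielding exactly the parameterised bounds claimed by the theorem.
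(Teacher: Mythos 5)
Your treatment of the cooperative case is essentially the paper's: partition refinement where two states in a block are compared by equality of the convex hull $\parcombuncertainty{s}{\partitioningord}$ of the projected polytopes $\paruncertainty{s}{a}{\partitioningord}$, each of which lives in dimension at most $f$ and has $2^{\bigO{f}}$ vertices; that yields the claimed $|\imdp|^{\bigO{1}}\cdot 2^{\bigO{f}}$ bound and is fine.

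The competitive case, however, has a genuine gap. The condition of Definition~\ref{def:bisim-control-syn} has the form ``for every $\gd_s\in\distrib{\actionSet}$ there exists $\gd_t\in\distrib{\actionSet}$ such that the projected Minkowski mixture for $t$ under $\gd_t$ is contained in the one for $s$ under $\gd_s$.'' You propose to replace \emph{both} quantifiers by enumeration over a finite ``representative family'' of schedulers ($2^b-1$ supports with extremal weights) and then compare polytopes pairwise. For the universal quantifier this can be repaired (a short convexity argument shows Dirac choices of $\gd_s$ suffice, since $\sum_a\gd_s(a)\gd_t^a$ matches a mixture whenever each $\gd_t^a$ matches $\delta_a$), but you neither state nor prove this. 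For the existential quantifier the reduction is simply false: the matching $\gd_t$ may need to be a strict mixture whose weights depend on the interval bounds, not an extremal point of any fixed finite family. For instance, if two actions of $t$ project to single points $x$ and $y$ and the challenged polytope of $s$ contains $(x+y)/2$ but neither $x$ nor $y$, only the weight $1/2$ works, and no enumeration of supports with ``extremal weights'' finds it. Deciding whether \emph{some} mixing weight $\rho$ satisfies $\paruncertainty{t}{\rho}{\partitioningord}\subseteq\paruncertainty{s}{a}{\partitioningord}$ is exactly the nontrivial step; the paper handles it by characterising $\bisim_\synthesis$ through equality of the sets of \emph{strictly minimal} polytopes ($\ViolateProc_\synthesis$) and by the procedure $\MinimalProc$, which builds a linear system $B\rho+d\geq 0$ in the unknown weights by enumerating tuples of corners of the remaining polytopes and then tests feasibility --- this corner enumeration is precisely where the exponential dependence on $f$ and $b$ legitimately enters. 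Without an argument of this kind (an LP over the continuous space of mixing weights, or the strict-minimality signature), your algorithm does not decide the $\exists\gd_t$ condition, so the second half of the theorem is not established.
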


Computing both bisimulations follows the standard partition refinement approach~\cite{PT87,KS90,CS02}, formalized by the procedure $\ComputeBisimProc$ in Algorithm~\ref{fig:algoForBisim}. Namely, we start with $\partitioningord$ being the complete relation and iteratively remove from $\partitioningord$ pairs of states that violate the definition of bisimulation with respect to $\partitioningord$.
The core part is finding out whether two states ``violate the definition of bisimulation''. This is where the algorithms for the two bisimulations differ.

%
%

\begin{algorithm}[t]
	\small
	\centering
	\begin{tikzpicture}[baseline]
		\node[draw, text height=2ex, minimum width=0.48\linewidth, inner sep = 0pt, minimum height=3ex] (heading) {\progHeader{$\ComputeBisimProc(\imdp)$}};
		\node[below=0.5ex of heading] (text) {%
			\begin{minipage}[h]{0.48\linewidth}
				\begin{algorithmic}[1]
				\STATE{$\partitioningord \gets \setcond{(s,t) \in \stateSet \times \stateSet}{\functioneval{\APLabelling}{s} = \functioneval{\APLabelling}{t}}$;}
				\REPEAT
					\STATE{$\partitioningord' \gets  \partitioningord$;}
					\FORALL{$s \in \stateSet$}
						\STATE{$D \gets \emptyset$};
						\FORALL{$t \in \relclass{s}{\partitioningord}$}
							\IF{$ \ViolateProc(s,t,\partitioningord)$} 
								\STATE{$D \gets D \cup \{t\}$;}
							\ENDIF
						\ENDFOR
						\STATE{split $\relclass{s}{\partitioningord}$ in $\partitioningord$ into $D$ and $\relclass{s}{\partitioningord}\setminus D$};
					\ENDFOR
				\UNTIL{$\partitioningord = \partitioningord'$};
				\RETURN{$\partitioningord$};
				\end{algorithmic}
			\end{minipage}
		}; 
	\end{tikzpicture}
\begin{minipage}[t]{0.5\linewidth}
\begin{tikzpicture}[baseline]
		\node[draw, text height=2ex, minimum width=\linewidth, inner sep = 0pt, minimum height=3ex] (heading) {\progHeader{$\ViolateProc_\un(s,t,\partitioningord)$}}; 
		\node[below=0.5ex of heading] (text) {%
			\begin{minipage}[h]{\linewidth}
				\begin{algorithmic}[1]
					\RETURN{$\parcombuncertainty{s}{\partitioningord} \neq  \parcombuncertainty{t}{\partitioningord}$};
				\end{algorithmic}
			\end{minipage}
		}; 
    \end{tikzpicture}
	\begin{tikzpicture}
		\node[draw, text height=2ex, minimum width=\linewidth, inner sep = 0pt, minimum height=3ex] (heading) {\progHeader{$\ViolateProc_\synthesis(s,t,\partitioningord)$}}; 
		\node[below=0.5ex of heading] (text) {%
			\begin{minipage}[h]{\linewidth}
				\begin{algorithmic}[1]
					\STATE{$S,T \gets \emptyset$;}
					\FORALL{$a\in \actionSet$}
						\IF{$\MinimalProc(s,a,\partitioningord)$ \hfill \texttt{// $\paruncertainty{s}{a}{\partitioningord}$ strictly m.?}}
							\STATE{$S \gets S \cup \{\paruncertainty{s}{a}{\partitioningord}\}$;}
						\ENDIF
						\IF{$\MinimalProc(t,a,\partitioningord)$}
							\STATE{$T \gets T \cup \{\paruncertainty{t}{a}{\partitioningord}\}$;}
						\ENDIF
					\ENDFOR
					\RETURN{$S \neq T$};
				\end{algorithmic}
			\end{minipage}
		}; 
    \end{tikzpicture}
    \end{minipage}
    \begin{tikzpicture}
        		\node[draw, text height=2ex, minimum width=\linewidth, inner sep = 0pt, minimum height=3ex] (heading) {\progHeader{$\MinimalProc(s,a,\partitioningord)$}}; 
        		\node[below=0.5ex of heading] (text) {%
        			\begin{minipage}[h]{\linewidth}
        				\begin{algorithmic}[1]
        					\STATE $k \gets |\actionSet| - 1$;
        					\STATE{$C_1,\ldots,C_k \gets $ compute the sets of corners of other polytopes;}
    						\STATE{$B \gets (\vec{1},\vec{-1})$;	\hfill \texttt{// constraints on $\rho$ such that $B\rho + d \geq 0$ implies $\paruncertainty{s}{\rho}{\partitioningord} \subseteq \paruncertainty{s}{a}{\partitioningord}$}}
    						\STATE{$d \gets (-1,1)$;  \hfill \texttt{// initially, $B\rho + d \geq 0$ implies $\sum \rho = 1$, i.e. $\rho \in \distrib{\{1,\ldots,k\}}$}}
        					\FORALL{$c_1,\ldots,c_k \in C_1\times\cdots\times C_k$}
        						\STATE{$R \gets \emptyset$;}
        						\FORALL{intersections $\vec{x}$ of $\paruncertainty{s}{a}{\partitioningord}$ with the line segment from $v_i$ to $v_j$ for some $i \neq j$}
	           						\STATE{$R \gets R \cup \{(r_1,\cdots,r_k)\}$ where $r_i\cdot c_i + r_j\cdot c_j = \vec{x}$ and $r_\ell = 0$ for $\ell \not\in \{i,j\}$;}
        						\ENDFOR
        						\FORALL{facets $F$ of the convex hull of $R$}
        							\STATE{add to matrices $B,d$ a constraint corresponding to the half-space given by $F$ that includes $R$;}
        						\ENDFOR
        					\ENDFOR
        					\RETURN{($B\rho + d = 0$ not feasible); \hfill \texttt{// no intersection of the convex hulls of all sets $R$?}}
        				\end{algorithmic}
        			\end{minipage}
        		}; 
            \end{tikzpicture}
	\caption{Probabilistic bisimulation algorithm for interval \MDP{}s}
	\label{fig:algoForBisim}
\end{algorithm}

\subsection{Cooperative resolution of non-determinism of the bisimulation $\bisim_\un$}
Let us first address $\bisim_\un$ where the violation is checked by the procedure $\ViolateProc_\un$.
We show that this amounts to checking inclusion of polytopes defined as follows. 
Recall that for $s \in\stateSet$ and an action $a\in\actionSet$, $\uncertainty{s}{a}$ denotes the polytope of feasible successor distributions over \emph{states} with respect to taking the action $a$ in the state $s$.
By $\paruncertainty{s}{a}{\partitioningord}$, we denote the polytope of feasible successor distributions over \emph{equivalence classes} of $\partitioningord$ with respect to taking the action $a$ in the state $s$. Formally, for $\mu\in\distrib{\partitionset[\partitioningord]{\stateSet}}$ we set $\mu \in \paruncertainty{s}{a}{\partitioningord}$ if we have 
\begin{equation*}\mu(\eqclass) \in \Big [
{\sum_{s'\in \eqclass}{\inf \; \intTransitionProbability(s, a, s')}}
{\sum_{s'\in \eqclass}{\sup \; \intTransitionProbability(s, a, s')}} \Big ]
\qquad \text{for each $\eqclass \in \partitionset[\partitioningord]{\stateSet}.$}
\end{equation*}
Note that we require that the probability of each class $\eqclass$ must be in the interval of the sum of probabilities that can be assigned to states of $\eqclass$. 
Furthermore, we define $\parcombuncertainty{s}{\partitioningord}$ as the convex hull of $\bigcup_{a\in\actionSet} \paruncertainty{s}{a}{\partitioningord}$. It is the set of feasible successor distributions over $\partitionset[\partitioningord]{\stateSet}$ with respect to taking an \emph{arbitrary} distribution over actions in state $s$. 
As specified in the procedure $\ViolateProc_\un$, we show that it suffices to check equality of these polytopes.


\begin{proposition}
We have $s\bisim_\un t$ if and only if $\APLabelling(s) = \APLabelling(t)$ and $\parcombuncertainty{s}{\bisim_\un} =  \parcombuncertainty{t}{\bisim_\un}$.
\end{proposition}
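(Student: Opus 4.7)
The plan is to leverage the standard fact that $\bisim_\un$ is itself a probabilistic $\un$-bisimulation, being the union of all such bisimulations, together with a lifting lemma relating the polytope $\parcombuncertainty{s}{R}$ to transitions $s \tra{} \mu$. The lifting lemma asserts that, for any equivalence $R$ on $\stateSet$, $\parcombuncertainty{s}{R}$ coincides with the set of $R$-class projections of all distributions $\mu$ with $s \tra{} \mu$. Once this lemma is in place, the two implications of the proposition become essentially bookkeeping.

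For the forward direction, assume $s \bisim_\un t$. Label equality is immediate from the bisimulation definition. For any $\bar{\mu} \in \parcombuncertainty{s}{\bisim_\un}$, the lifting lemma produces $s \tra{} \mu$ whose $\bisim_\un$-class projection is $\bar{\mu}$. Applying the $\bisim_\un$-bisimulation matching condition at $(s,t)$ yields some $t \tra{} \nu$ with $\mu(\mathcal{C}) = \nu(\mathcal{C})$ for every $\bisim_\un$-class $\mathcal{C}$, so $\nu$ projects to the same $\bar{\mu}$, whence $\bar{\mu} \in \parcombuncertainty{t}{\bisim_\un}$. The other inclusion follows by symmetry.

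For the backward direction, assume the labels agree and $\parcombuncertainty{s}{\bisim_\un} = \parcombuncertainty{t}{\bisim_\un}$. Let $R$ be the equivalence closure of $\bisim_\un \cup \{(s,t),(t,s)\}$, i.e.\ the $\bisim_\un$-partition with the classes of $s$ and $t$ merged. I claim $R$ is a probabilistic $\un$-bisimulation; the maximality of $\bisim_\un$ then forces $(s,t) \in \bisim_\un$. For pre-existing pairs $(s',t') \in \bisim_\un$, the matching transitions supplied by the bisimulation property of $\bisim_\un$ already agree on every $\bisim_\un$-class and therefore on the coarser $R$-classes (which are unions of $\bisim_\un$-classes). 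For the new pair $(s,t)$, any $s \tra{} \mu$ projects to a point of $\parcombuncertainty{s}{\bisim_\un} = \parcombuncertainty{t}{\bisim_\un}$, and the lifting lemma supplies some $t \tra{} \nu$ projecting to the same point; this $\nu$ agrees with $\mu$ on $\bisim_\un$-classes, hence also on $R$-classes.

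The main obstacle is the lifting lemma itself. One inclusion is routine: linearity of the projection map sends $\convexComb\{\uncertainty{s}{a} \mid a \in \actionSet\}$ into $\convexComb\{\paruncertainty{s}{a}{R} \mid a \in \actionSet\} = \parcombuncertainty{s}{R}$, using that the projection of any $\mu \in \uncertainty{s}{a}$ satisfies the per-class interval-sum constraints defining $\paruncertainty{s}{a}{R}$. The reverse inclusion requires showing that every $\bar{\mu} \in \paruncertainty{s}{a}{R}$ is the projection of some $\mu \in \uncertainty{s}{a}$: within each class $\mathcal{C}$ one must redistribute the mass $\bar{\mu}(\mathcal{C})$ among the states $s' \in \mathcal{C}$ subject to the individual intervals $\intTransitionProbability(s,a,s')$. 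This is a standard per-class linear-feasibility argument that crucially exploits the rectangularity of interval constraints in an \IMDP{}, namely the independence of the per-successor intervals. Combined with the commutation of convex hull and projection, this yields the lemma and thereby the proposition.
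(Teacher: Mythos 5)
Your proof is correct and takes essentially the same route as the paper's: both reduce $s \bisim_\un t$ to equality of the projected polytopes via the correspondence between cooperative transitions $s \tra{} \mu$ and points of $\parcombuncertainty{s}{\bisim_\un}$, matching in each direction through equal class-projections. The only difference is one of detail, not of method: you spell out the lifting lemma (surjectivity of projection onto $\paruncertainty{s}{a}{R}$ via per-class redistribution using the rectangular interval constraints, plus commutation of convex hull with projection) and the merged equivalence relation plus maximality argument needed for the ``if'' direction, both of which the paper's very terse proof leaves implicit.
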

\begin{proof}
Let us first introduce one notation. For each distribution $\mu \in \distrib{\stateSet}$, let $\overline{\mu}\in \distrib{\partitionset[\bisim_\un]{\stateSet}}$ denote the corresponding distribution such that $\overline{\mu}(\eqclass) = \sum_{s\in\eqclass} \mu(s)$.
As regards the ``if'' part, for each choice $s \tra{} \mu$, we have $\overline{\mu} \in \parcombuncertainty{s}{\bisim_\un}$. Similarly, for each $\gd \in \parcombuncertainty{t}{\bisim_\un}$, there is a choice $t \tra{} \nu$ such that $\overline{\nu} = \gd$. Hence, $s\bisim_\un t$. As regards the ``only if'' part, let us assume that there is a distribution $\gd$ over equivalence classes such that, say $\gd \in \parcombuncertainty{s}{\bisim_\un} \setminus \parcombuncertainty{t}{\bisim_\un}$. There must be a choice $s \tra{} \mu$ such that $\overline{\mu} = \gd$ and there is no choice $t \tra{} \nu$ such that $\overline{\nu} = \gd$. Hence, $s\not\bisim_\un t$.
\end{proof}

\paragraph{Complexity}
Given an \IMDP{} $\imdp$, let $\setcardinality{S}=n$, $\setcardinality{A}=m$, $b$ be the maximal number of different actions $\max_{s\in\stateSet} | \{ \intTransitionProbability(s,a,\cdot) \mid a \in\actionSet \} |$, and $f$ be the maximal support of an action $\max_{s\in\stateSet, a\in\actionSet} | \{s' \mid \intTransitionProbability(s,a,s') \neq [0,0] \} |$.

It is easy to see that the procedure $\ViolateProc_\un$ is called at most $n^3$-times. Each polytope $\paruncertainty{s}{a}{\partitioningord}$ has at most $C = f\cdot 2^{f-1}$ corners, computing the convex hull $\parcombuncertainty{s}{\partitioningord}$ takes $\bigO{ (bC)^2 }$ time~\cite{Chand:1970:ACP}. Checking inclusion of two polytopes then can be done in time polynomial~\cite{Subramani09} in the number of corners of these two polytopes. 
%
%
In total, computing of $\bisim_\un$ can be done in time $|\imdp|^{\bigO{1}} \cdot 2^{\bigO{f}}$.

\subsection{Competitive resolution of non-determinism of the bisimulation $\bisim_\synthesis$} In this case, the violation of bisimilarity of $s$ and $t$ with respect to $\partitioningord$ is addressed by the procedure $\ViolateProc_\synthesis$. Here, we check that $s$ and $t$ have the same set of strictly minimal polytopes. For a state $s$, an action $a\in\actionSet$, and an equivalence $\partitioningord \subseteq \stateSet\times\stateSet$, we say that $\paruncertainty{s}{a}{\partitioningord}$ is \emph{strictly minimal} if no convex combination of the remaining polytopes of $s$ is a subset of $\paruncertainty{s}{a}{\partitioningord}$. More precisely, if for no distribution $\gd \in \distrib{\actionSet \setminus \{a\}}$, we have  $\paruncertainty{s}{\rho}{\partitioningord} \subseteq \paruncertainty{s}{a}{\partitioningord}$ where $\paruncertainty{s}{\rho}{\partitioningord}$ denotes the polytope $\{\sum_{b\in\actionSet\setminus \{a\}} \gd(b) \cdot \vec{x}_b \mid \text{each} \; \vec{x}_b \in \paruncertainty{s}{b}{\partitioningord} \}$.

\begin{proposition}
We have $s\bisim_\synthesis t$ if and only if $\APLabelling(s) = \APLabelling(t)$ and $\{\paruncertainty{s}{a}{\bisim_\synthesis} \mid a\in\actionSet,\paruncertainty{s}{a}{\bisim_\synthesis} \text{is strictly minimal}\} = \{\paruncertainty{t}{a}{\bisim_\synthesis} \mid a\in\actionSet,\paruncertainty{t}{a}{\bisim_\synthesis} \text{is strictly minimal}\}$.
%
\end{proposition}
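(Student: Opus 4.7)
The plan is to use the $\controlSyn$ formulation of $\bisim_\synthesis$ given by Theorem~\ref{thm:coinsideness}, so the bisimulation requirement becomes: for every $\gd_s \in \Disc{\actionSet}$ there exists $\gd_t \in \Disc{\actionSet}$ with $\paruncertainty{t}{\gd_t}{\bisim_\synthesis} \subseteq \paruncertainty{s}{\gd_s}{\bisim_\synthesis}$, and symmetrically. I abbreviate $P_s^\gd := \paruncertainty{s}{\gd}{\bisim_\synthesis}$ and $P_s^a := \paruncertainty{s}{a}{\bisim_\synthesis}$.

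The core technical tool will be an \emph{absorption lemma}: for a convex polytope $P$, an arbitrary polytope $Q$, and $\alpha \in (0,1)$, we have $\alpha P + (1-\alpha) Q \subseteq P$ iff $Q \subseteq P$. The forward direction uses a separating hyperplane between any $y \in Q \setminus P$ and the nearest $v \in P$, which makes $\alpha v + (1-\alpha) y$ escape $P$ on the $y$-side. Writing $\gd = \gd(a) \delta_a + (1-\gd(a))\rho$ with $\rho \in \Disc{\actionSet \setminus \{a\}}$, this lemma yields $P_s^\gd \subseteq P_s^a \iff P_s^\rho \subseteq P_s^a$ whenever $\gd(a) < 1$. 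Combined with iterative substitution --- any non-strictly-minimal action in the support of $\gd$ can be replaced by a dominating $\rho$ on the remaining actions, keeping the resulting polytope inside $P_s^\gd$ by Minkowski monotonicity --- this gives the \emph{representation statement}: for every $\gd$ there is $\gd'$ supported on strictly minimal actions with $P_s^{\gd'} \subseteq P_s^\gd$.

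For the $(\Leftarrow)$ direction I assume both hypotheses and verify the $\controlSyn$-condition. Given $\gd_s$, the representation statement supplies $\gd_s'$ on strictly minimal actions of $s$ with $P_s^{\gd_s'} \subseteq P_s^{\gd_s}$; by the hypothesised coincidence, the strictly minimal polytopes of $s$ and $t$ are identical, so transporting $\gd_s'$ along any polytope-preserving matching of strictly minimal actions produces $\gd_t'$ with $P_t^{\gd_t'} = P_s^{\gd_s'} \subseteq P_s^{\gd_s}$. The symmetric direction is handled identically.

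For the $(\Rightarrow)$ direction, assume $s \bisim_\synthesis t$ and fix a strictly minimal $P_s^a$. Using bisimulation followed by the representation statement, and then bisimulation in reverse followed by the representation statement again, I obtain $\gd_t'$ on strictly minimal actions of $t$ and $\gd_s'$ on strictly minimal actions of $s$ with $P_s^{\gd_s'} \subseteq P_t^{\gd_t'} \subseteq P_s^a$; absorption together with strict minimality of $a$ forces $\gd_s' = \delta_a$, collapsing the chain to $P_t^{\gd_t'} = P_s^a$. Applying the same procedure starting from each strictly minimal $a_i'$ in the support of $\gd_t'$ (working from $t$ back to $s$) yields $\gd_s^{(i)}$ on strictly minimal actions of $s$ with $P_s^{\gd_s^{(i)}} = P_t^{a_i'}$. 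Substituting these identities into $P_s^a = \sum_i \gd_t'(a_i') P_t^{a_i'}$ expresses $P_s^a$ as $P_s^\eta$ for the averaged distribution $\eta(c) := \sum_i \gd_t'(a_i') \gd_s^{(i)}(c)$, and absorption together with strict minimality of $a$ forces $\eta = \delta_a$. Comparing coefficients gives $\gd_s^{(i)} = \delta_a$ for each $a_i'$ in the support of $\gd_t'$, so $P_t^{a_i'} = P_s^a$ for any such $a_i'$; thus $P_s^a$ lies in the strictly minimal polytope set of $t$, and symmetry closes the equality. The main obstacle is precisely this double-bounce argument: neither the $\gd_t$ delivered by the bisimulation nor its representation-statement refinement $\gd_t'$ need be Dirac, and only the coefficient-comparison on $\eta$ extracts a Dirac witness from inside the support of $\gd_t'$.
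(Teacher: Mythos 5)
Your overall architecture is sound and genuinely different from the paper's proof: the paper argues in the $\paramSyn$ formulation by constructing explicit choices of Nature (selecting, for each action of $s$, a distribution and showing no convex combination of the selected points can land in an unmatched strictly minimal polytope of $t$), whereas you reduce the $\controlSyn$ formulation to inclusions of projected Minkowski mixtures and combine an absorption lemma with a representation statement. The absorption lemma is correct as sketched, and your double-bounce argument in the ``only if'' direction (forcing $\gd_s'=\delta_a$ by absorption plus strict minimality, then comparing coefficients of $\eta$ to extract a Dirac inside the support of $\gd_t'$) is a clean and valid way to get from ``$\paruncertainty{s}{a}{\bisim_\synthesis}$ is a mixture of strictly minimal polytopes of $t$'' to ``$\paruncertainty{s}{a}{\bisim_\synthesis}$ \emph{is} a strictly minimal polytope of $t$''. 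However, both directions of your proof rest on the representation statement, and the justification you give for it has a genuine gap.

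The iterative substitution is not a proof: when you replace a non-strictly-minimal action $a$ in the support by a witness $\gd\in\distrib{\actionSet\setminus\{a\}}$ with $\paruncertainty{s}{\gd}{\bisim_\synthesis}\subseteq\paruncertainty{s}{a}{\bisim_\synthesis}$, the witness may itself put mass on non-strictly-minimal actions, including ones you eliminated earlier, and you give no progress measure or limit argument. The problematic case is exactly when $\paruncertainty{s}{a}{\bisim_\synthesis}$ \emph{equals} a Minkowski mixture of the other polytopes. For instance, with two equivalence classes and four actions whose class-polytopes are the single points $0$, $1$, $\tfrac12$, $\tfrac14$ (probability of the first class), only the first two actions are strictly minimal; starting from the Dirac on the third action and using the witnesses $\tfrac12=\tfrac23\cdot\tfrac14+\tfrac13\cdot 1$ and $\tfrac14=\tfrac12\cdot 0+\tfrac12\cdot\tfrac12$, the substitution process bounces mass between the two non-minimal actions forever and never reaches a mixture supported on strictly minimal actions. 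The statement itself is true and fixable --- but only under the tacit assumption that actions with identical polytopes are collapsed (with two actions carrying the same polytope, neither is strictly minimal under the literal definition, and the representation statement fails outright); given that, one can argue e.g.\ via support functions, where the polytopes become finitely many points and Minkowski mixtures become convex combinations: rewrite the mixture over the extreme points of this finite set, and pick, in the compact set of all $\gd'$ with $\paruncertainty{s}{\gd'}{\bisim_\synthesis}\subseteq\paruncertainty{s}{\gd}{\bisim_\synthesis}$, a minimiser of a Minkowski-linear functional that strictly decreases under strict inclusion (such as mean width); any non-strictly-minimal action in the support of such a minimiser would then allow either a strict decrease of the functional or a re-expansion over extreme points, yielding the required $\gd'$ supported on strictly minimal actions. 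Without some argument of this kind, the central lemma of your proof is unproven.
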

\begin{proof}
We first address the ``if'' part. 
 For each choice of {\Envir} $ (\mu_a)_{a\in\actionSet}$ where each $s \tra{a} \mu_a \;$, let $M = \{\overline{\mu}_a \mid a\in\actionSet\}$ and $M' \subseteq M$ be the subset where each distribution lies within some strictly minimal polytope $\paruncertainty{s}{b}{\bisim_\synthesis}$. Because the strictly minimal polytopes coincide, we can construct a choice of {\Envir} $N = (\nu_a)_{a\in\actionSet}$ such that $N = \{\overline{\nu}_a \mid a\in\actionSet\} = M'$. Because $N \subseteq M$, it is easy to see that for each $t \tra{(\nu_a)\;\;} \nu$ there is $s \tra{(\mu_a)\;\;} \mu$ such that $\mu(\equivclass) =  \nu(\equivclass)$ for each $\equivclass \in \partitionset{\stateSet}$.
 
 As regards the ``only if'' part, let us assume that there is, say in $t$, a strictly minimal polytope $\paruncertainty{t}{b}{\bisim_\synthesis}$ that is not in the set of strictly minimal polytopes for $s$. There is a choice of {\Envir} $(\mu_a)_{a\in\actionSet}$ for state $s$ such that no convex combination of elements of $M = \{\overline{\mu}_a \mid a\in\actionSet\}$ lies in $\paruncertainty{t}{b}{\bisim_\synthesis}$; in particular no element of $M$ lies in $\paruncertainty{t}{b}{\bisim_\synthesis}$. For any choice of {\Envir} $(\nu_a)_{a\in\actionSet}$ for state $t$, 
 $\overline{\nu}_b$ is not a convex combination of elements from $M$. Thus, if {\Sched} chooses action $b$, there is no $s \tra{(\mu_a)\;\;} \mu$ such that $\mu(\equivclass) =  \nu_b(\equivclass)$ for each $\equivclass \in \partitionset{\stateSet}$ and it does \emph{not} hold $s \bisim_\synthesis t$.
\end{proof}

Next, we need to address how to compute whether a polytope is strictly minimal. We construct $B$ and $d$ such that $B\rho + d \geq 0$ implies $\paruncertainty{s}{\rho}{\partitioningord} \subseteq \paruncertainty{s}{a}{\partitioningord}$. Checking of strictly minimality then reduces to checking feasibility of this linear system. The system gets constructed iteratively. Let $P_1, \cdots, P_k$ denote the polytopes corresponding to all actions in $s$ except for $a$. We enumerate all combinations $(c_1,\ldots,c_k) \in C(P_1) \times \cdots \times C(P_k)$ of corners of the polytopes. For each such combination we add into $B$ and $d$ new constraints $B_{(c_1,\ldots,c_k)}$ and $d_{(c_1,\ldots,c_k)}$ such that for any $\rho$ satisfying $B_{(c_1,\ldots,c_k)}\rho + d_{(c_1,\ldots,c_k)} \geq 0$ we have $\sum \rho_i c_i \in \paruncertainty{s}{a}{\partitioningord}$. For details, see the procedure $\MinimalProc$ in Algorithm~\ref{fig:algoForBisim}.

\begin{proposition}
We have $B\rho +d \geq 0$ is not feasible if and only if $\paruncertainty{s}{a}{\partitioningord}$ is strictly minimal where the rows of $B$ and $d$ are obtained as a union of rows
\begin{align*}
B \;\; &= \;\; \{\vec{1},\vec{-1}\} \; \cup \; \bigcup \; \{ B_{(c_1,\ldots,c_k)} \; \mid \; (c_1,\ldots,c_k) \;\in\; C(P_1) \times \cdots \times C(P_k) \} \\
d \;\; &= \;\; \{-1,1\} \;\cup\; \bigcup \; \{ d_{(c_1,\ldots,c_k)} \; \mid \; (c_1,\ldots,c_k) \;\in\; C(P_1) \times \cdots \times C(P_k) \}.
\end{align*}
\end{proposition}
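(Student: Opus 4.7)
The plan is to reduce the proposition to two sub-claims: (i) a geometric characterisation of strict minimality in terms of finitely many linear constraints on $\rho$, and (ii) a justification that the system $B\rho + d \geq 0$ computed by $\MinimalProc$ is exactly that conjunction of constraints (together with the simplex equality $\sum_i \rho_i = 1$).

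For (i), let $P_1, \ldots, P_k$ enumerate the polytopes $\paruncertainty{s}{b}{\partitioningord}$ for $b \in \actionSet \setminus \{a\}$. By definition, $\paruncertainty{s}{\rho}{\partitioningord}$ is the weighted Minkowski sum $\sum_i \rho_i P_i$. Using the standard identity that a Minkowski sum of convex hulls equals the convex hull of element-wise sums, I would obtain
\[
\sum_{i=1}^k \rho_i P_i \;=\; \textrm{conv}\Big\{\,\textstyle\sum_{i=1}^k \rho_i c_i \;\big|\; c_i \in C(P_i)\,\Big\}.
\]
Since $\paruncertainty{s}{a}{\partitioningord}$ is convex, the containment $\paruncertainty{s}{\rho}{\partitioningord} \subseteq \paruncertainty{s}{a}{\partitioningord}$ is then equivalent to requiring $\sum_i \rho_i c_i \in \paruncertainty{s}{a}{\partitioningord}$ for every corner tuple $(c_1,\ldots,c_k) \in C(P_1)\times\cdots\times C(P_k)$. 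Hence $\paruncertainty{s}{a}{\partitioningord}$ fails to be strictly minimal iff there exists $\rho \in \Delta^{k-1}$ satisfying that finite conjunction of linear inequalities in $\rho$.

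For (ii), I would argue that, for each fixed tuple $c = (c_1,\ldots,c_k)$, the block $B_c \rho + d_c \geq 0$ added by the inner loop of $\MinimalProc$ is a half-space description of the polytope $Q_c = \{\rho \in \Delta^{k-1} : \sum_i \rho_i c_i \in \paruncertainty{s}{a}{\partitioningord}\}$. The key observation is that the set $R$ computed in the algorithm collects exactly the boundary points of $Q_c$ that lie on the one-skeleton of the simplex: for each pair $i\neq j$, the image of the simplex edge $e_i e_j$ under $\rho\mapsto\sum_\ell \rho_\ell c_\ell$ is the segment $[c_i,c_j]$, and its intersection with $\paruncertainty{s}{a}{\partitioningord}$ lifts linearly to the subsegment of $Q_c$ on that edge. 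Combined with the simplex constraints $\vec{1}\rho - 1 \geq 0$, $-\vec{1}\rho + 1 \geq 0$ preloaded into $B,d$, the facets of $\textrm{conv}(R)$ provide the supporting half-spaces that precisely cut out $Q_c \cap \Delta^{k-1}$. Combining this with (i) yields: $B\rho + d \geq 0$ is feasible iff there exists $\rho \in \Delta^{k-1}$ with $\paruncertainty{s}{\rho}{\partitioningord} \subseteq \paruncertainty{s}{a}{\partitioningord}$, and the contrapositive gives the proposition.

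\textbf{Main obstacle.} The hard part will be the geometric justification in (ii): showing that the facets of $\textrm{conv}(R)$, built only from edge-intersections, yield a correct half-space representation of $Q_c \cap \Delta^{k-1}$. This requires a careful argument that every facet of $Q_c$ (arising either from a facet of $\paruncertainty{s}{a}{\partitioningord}$ pulled back along $\rho\mapsto\sum_i \rho_i c_i$, or from a facet of the simplex) is witnessed by the edge-intersection points of $R$, i.e.\ that its vertices can be recovered from simplex-edge data via the affine parametrisation. Step (i), by contrast, is a clean convex-geometry reduction using convexity of $\paruncertainty{s}{a}{\partitioningord}$ and the Minkowski-convex-hull identity.
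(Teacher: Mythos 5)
Your step (i) is essentially the paper's own argument: since $\paruncertainty{s}{\rho}{\partitioningord}$ is the weighted Minkowski sum of the $P_i$, its corners are exactly the $\rho$-combinations $\sum_i \rho_i c_i$ of corner tuples, so by convexity of $\paruncertainty{s}{a}{\partitioningord}$ the inclusion $\paruncertainty{s}{\rho}{\partitioningord} \subseteq \paruncertainty{s}{a}{\partitioningord}$ holds iff every such combination lies in $\paruncertainty{s}{a}{\partitioningord}$. That reduction is correct and is all the paper uses for the direction ``feasible $\Rightarrow$ not strictly minimal''.

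The genuine gap is your step (ii), which you defer as the ``main obstacle'' and never carry out -- and the claim you propose to prove there is false as stated. The set $R$ records only intersections of $\paruncertainty{s}{a}{\partitioningord}$ with the segments $[c_i,c_j]$, i.e.\ with the image of the one-skeleton of the simplex, whereas the polytope $Q_c=\{\rho \in \Delta^{k-1} \mid \sum_i\rho_i c_i\in \paruncertainty{s}{a}{\partitioningord}\}$ can have vertices in the relative interior of the simplex; worse, $R$ can be empty both when $Q_c$ is nonempty (take $k=3$ and $\paruncertainty{s}{a}{\partitioningord}$ meeting the interior of $\mathrm{conv}\{c_1,c_2,c_3\}$ but none of the segments $[c_i,c_j]$) and when $Q_c$ is empty (where the block ought to be infeasible rather than vacuous). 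Hence ``the facets of $\mathrm{conv}(R)$ together with the simplex constraints cut out exactly $Q_c\cap\Delta^{k-1}$'' cannot be established in the generality your final equivalence needs, and the direction ``not strictly minimal $\Rightarrow$ $B\rho+d\geq 0$ feasible'' is left unproved. The paper's proof does not attempt this geometric exactness at all: it works at the level of the stipulated property of the blocks $B_{(c_1,\ldots,c_k)},d_{(c_1,\ldots,c_k)}$ (a $\rho$ satisfying a block has $\sum_i\rho_i c_i\in\paruncertainty{s}{a}{\partitioningord}$), proves the first direction exactly as in your step (i), and for the converse simply asserts that a witness $\rho$ with $\paruncertainty{s}{\rho}{\partitioningord}\subseteq\paruncertainty{s}{a}{\partitioningord}$ satisfies the system. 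To repair your write-up you should either argue at that specification level, taking the block property in both directions (feasibility of a block iff membership of the corresponding corner combination), or genuinely prove that $\MinimalProc$ produces such blocks -- which, by the examples above, requires handling the cases your edge-intersection argument misses.
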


\begin{proof}
Let $\rho$ be any feasible solution of the system. It is easy to see that $\paruncertainty{s}{\rho}{\partitioningord} \subseteq \paruncertainty{s}{a}{\partitioningord}$ since $\paruncertainty{s}{\rho}{\partitioningord}$ is convex and since all corners of $\paruncertainty{s}{\rho}{\partitioningord}$ (obtained as a convex $\rho$-combination of corners of all $\paruncertainty{s}{b}{\partitioningord}$) lie within $\paruncertainty{s}{a}{\partitioningord}$. Hence, $\paruncertainty{s}{a}{\partitioningord}$ is not strictly minimal.
As regards the other direction, let $\paruncertainty{s}{a}{\partitioningord}$ be not strictly minimal. By definition, there is a distribution $\rho$ over the remaining actions in $s$ such that $\paruncertainty{s}{\rho}{\partitioningord} \subseteq \paruncertainty{s}{a}{\partitioningord}$. Then, this distribution $\rho$ must satisfy $B\rho + d \geq 0$.
\end{proof}

\paragraph{Complexity}
Again let $\setcardinality{S}=n$, $\setcardinality{A}=m$, $b$ be the maximal number of different actions $\max_{s\in\stateSet} | \{ \intTransitionProbability(s,a,\cdot) \mid a \in\actionSet \} |$, and $f$ be the maximal support of an action $\max_{s\in\stateSet, a\in\actionSet} | \{s' \mid \intTransitionProbability(s,a,s') \neq [0,0] \} |$.

Again, $\ViolateProc_\synthesis$ is called at most $n^3$ times. The procedure $\ViolateProc_\synthesis$ is then linear in $m$ and in the complexity of $\MinimalProc$. There are at most $(f\cdot 2^{f-1})^b$ combinations of corners of the polytopes. For each such combination, $b(b-1)$ times the intersection points of a line and a polytope are computed (in time polynomial in $|\imdp|$), and at most $f!$ facets of the resulting polytope $R$ are inspected. Overall, computing of $\bisim_\synthesis$ can be done in time $|\imdp|^{\bigO{1}} \cdot 2^{\bigO{f^2b}}$.

\section{Case Study}
\label{sec:case-studies}

As a case study, we consider a model of Carrier Sense Multiple Access
with Collision Detection (CSMA / CD), which is an access control on a
shared medium, used mostly in early Ethernet technology.  In this
scheme multiple devices can be connected to a shared bus.  Multiple
attempts at the same time to grab bus access leads to collision.  At
this point, the senders in collision probabilistically schedule a
retransmission according to exponential back-off algorithm.  The
algorithm uniformly determines a delay before the next retransmission,
which is between $0$ to $2^n-1$ time slots after occurrence of $n$-th
collision. After a pre-specified number of failed retransmissions, a
sender aborts the sending process.

There are two sources of uncertainty in the model.  Uncertainty in
sending data lies in the fact that the exact probability of sending a
message from a sender could be unknown.  Instead it is within an
interval.  The other source comes from imprecise information about
collision.  If two nodes try to send a frame at the slightly same
time, a collision will happen.  Conversely it will not happen, when
the later transmitter checks the bus and detects it occupied.  Since
the exact probability of a collision occurrence depends on many
parameters and is likely unknown, it is expressed as an uncertain
interval rather than an exact value in the model.

Concurrent execution of the node and the bus processes assembles the
CSMA/CD model.  To this end we need a formalism that supports
communication among components via parallel composition.  We thus
consider a subclass of abstract
PAs~\cite{DBLP:conf/vmcai/DelahayeKLLPSW11} with interval constraints
on probabilities.  The subclass in general is not closed under
parallel composition.  The problem arises when two actions exhibiting
uncertainty want to synchronise.  Parallel composition in this case
imposes some interdependency between the choices of the composed
action, which cannot be expressed by a simple interval bound and needs
to be expressed by more complicated polynomial constraints.
Nevertheless by excluding synchronisation of actions containing
uncertainty, abstract PAs with interval constraints feature closure
under parallel composition and thereby allow compositional modelling.
This is of course not a strict restriction, because we can always
shift uncertainty to the actions that are not subject to parallel
composition by introducing proper auxiliary states and transitions.
In our case study all components are in this subclass and respects the
restriction, as uncertainty prevails on actions that are not subject
to parallel composition.  Consequently it enables us to utilise
compositional system design by using existing tools.  Since the model
arising from parallel composition is not subject to any further
communication, we can close it and obtain an IMDP at the end.

We use process algebra prCRL~\cite{DBLP:conf/acsd/KatoenPST10},
implemented in tool scoop~\cite{DBLP:conf/qest/Timmer11}, for
compositional modelling of CSMA/CD.  The model has two parameters:
number of nodes attached to the bus and maximum collision allowed
before abortion.  As we are interested in model checking of a model
arisen from parallel composition we apply the semantics of
bisimulation in cooperative way, namely $\bisim_\un$. The state space
is generated by scoop and then the bisimulation quotient is computed.
Since the maximum size $f$ of the set supported by uncertain
transitions is two, the algorithm of
Section~\ref{sec:deciding_strong_bisim} is tractable.  Reduction in
state and transition space gained after bisimulation minimisation is
reported in Table~\ref{tab:reduc}.  As shown in the table, the
reduction of both state and transition space increases when putting
more nodes in the network.  Indeed, then there are more nodes
performing similar activities and thereby increasing the symmetry in
the model.  On the other hand, increasing the maximum number of
collisions allows the nodes to more likely send frames at different
time slots. As a result it decreases the symmetry and then the
reduction factor.
\clearpage
\begin{table}[t]
  \footnotesize
  \caption{Impact of bisimulation minimisation on CSMA/CD model\protect\footnotemark}
  \centering
  \begin{tabular}{|c | c | c c | c c | c c |}
    \hline
     \multirow{2}{*}{Node \#} & \multirow{2}{*}{Max collision \#} &  \multicolumn{2}{|c|}{Original Model} &  \multicolumn{2}{|c|}{Minimised Model}  & \multicolumn{2}{|c|}{Reduction Factor}\\
                              &                                   & State \#  & Transition \# & States \# & Transition \#  & For states & For transitions \\ \hline
     \multirow{3}{*}{2}       & 1 & 233     & 466    & 120     & 220    & 48\% & 53\%\\
                              & 2 & 497     & 988    & 310     & 581    & 38\% & 41\%\\
                              & 3 & 865     & 1858   & 576     & 1186   & 33\% & 36\%\\ \hline
     \multirow{3}{*}{3}       & 1 & 4337    & 10074  & 1244    & 2719   & 71\% & 73\%\\
                              & 2 & 52528   & 125715 & 18650   & 42795  & 64\% & 66\%\\
                              & 3 & 239213  & 619152 & 90492   & 225709 & 62\% & 64\%\\ \hline
     \multirow{2}{*}{4}       & 1 & 60357   & 154088 & 10904   & 27308  & 82\% & 82\%\\
                              & 2 & 1832005 & 4876636& 421570  & 1112129& 77\% & 77\%\\
                              \hline
     \multirow{1}{*}{5}       & 1 & 751905  & 2043090& 90538   & 248119 & 88\% & 88\% \\

    \hline
  \end{tabular}\label{tab:reduc}
\end{table}

\footnotetext{ The computation time does not reflect the complexity of the
  algorithm, as it is greatly effected by the file exchange
  between the tools used for modelling and bisimulation minimisation. Hence it is omitted from the table.}

\section{Conclusion}
\label{sec:conclusion}

In this paper, we study strong bisimulations for interval MDPs. In these models there are two sources of non-determinism and we deal with different interpretations of these non-determinisms. This yields two different bisimulations and we give decision algorithms for both of them.

Note that our decision algorithms can be easily adapted to the slightly broader setting of uncertain MDPs with rectangular uncertainty sets~\cite{DBLP:journals/ior/NilimG05}. In this setting, a general convex polytope (not necessarily induced by intervals) is associated to each action in each state. Still, it is assumed that transition probabilities from different states or under different actions are independent.

First open question for future work is the exact complexity of our decision problems. 
One way to address this question is to prove NP-hardness of the general problem. Another way is to identify interesting subclasses of interval MDPs for that a polynomial-time algorithm exists.
Second direction for future work is to address a richer formalism for uncertainties (such as polynomial constraints or even parameters appearing in multiple states/actions).
Third, compositional modelling over interval models also deserves a more systematic treatment. Understanding better the ways how large models with interval uncertainties can be composed, may bring further ideas for efficient analysis of these models.

%

\paragraph{Acknowledgements}  We would like to thank Holger Hermanns for inspiring discussions.
This work has been supported by the DFG as
part of the SFB/TR~14 ``Automatic Verification and Analysis of
Complex Systems'' (AVACS), by the Czech Science Foundation under the grant agreement
no.~P202/12/G061, by the DFG/NWO Bilateral Research
Programme ROCKS, and by the European Union Seventh Framework
Programme under grant agreement no.\@ 295261 (MEALS) and 318490 (SENSATION). 
%

\bibliographystyle{eptcs}
\bibliography{biblio,bib}

\begin{thebibliography}{10}
\providecommand{\bibitemdeclare}[2]{}
\providecommand{\surnamestart}{}
\providecommand{\surnameend}{}
\providecommand{\urlprefix}{Available at }
\providecommand{\url}[1]{\texttt{#1}}
\providecommand{\href}[2]{\texttt{#2}}
\providecommand{\urlalt}[2]{\href{#1}{#2}}
\providecommand{\doi}[1]{doi:\urlalt{http://dx.doi.org/#1}{#1}}
\providecommand{\bibinfo}[2]{#2}

\bibitemdeclare{inproceedings}{DBLP:conf/concur/AlurHKV98}
\bibitem{DBLP:conf/concur/AlurHKV98}
\bibinfo{author}{Rajeev \surnamestart Alur\surnameend},
  \bibinfo{author}{Thomas~A. \surnamestart Henzinger\surnameend},
  \bibinfo{author}{Orna \surnamestart Kupferman\surnameend} \&
  \bibinfo{author}{Moshe~Y. \surnamestart Vardi\surnameend}
  (\bibinfo{year}{1998}): \emph{\bibinfo{title}{Alternating Refinement
  Relations}}.
\newblock In: {\sl \bibinfo{booktitle}{CONCUR}}, {\sl \bibinfo{series}{LNCS}}
  \bibinfo{volume}{1466}, \bibinfo{publisher}{Springer}, pp.
  \bibinfo{pages}{163--178}.
\newblock \urlprefix\url{http://dx.doi.org/10.1007/BFb0055622}.

\bibitemdeclare{inproceedings}{DBLP:conf/tacas/BenediktLW13}
\bibitem{DBLP:conf/tacas/BenediktLW13}
\bibinfo{author}{Michael \surnamestart Benedikt\surnameend},
  \bibinfo{author}{Rastislav \surnamestart Lenhardt\surnameend} \&
  \bibinfo{author}{James \surnamestart Worrell\surnameend}
  (\bibinfo{year}{2013}): \emph{\bibinfo{title}{LTL Model Checking of Interval
  Markov Chains}}.
\newblock In: {\sl \bibinfo{booktitle}{TACAS}}, {\sl \bibinfo{series}{LNCS}}
  \bibinfo{volume}{7795}, \bibinfo{publisher}{Springer}, pp.
  \bibinfo{pages}{32--46}.
\newblock \urlprefix\url{http://dx.doi.org/10.1007/978-3-642-36742-7_3}.

\bibitemdeclare{book}{Billingsley1979}
\bibitem{Billingsley1979}
\bibinfo{author}{Patrick \surnamestart Billingsley\surnameend}
  (\bibinfo{year}{1979}): \emph{\bibinfo{title}{Probability and Measure}}.
\newblock \bibinfo{publisher}{John Wiley and Sons}, \bibinfo{address}{New York,
  Toronto, London}.

\bibitemdeclare{inproceedings}{CS02}
\bibitem{CS02}
\bibinfo{author}{Stefano \surnamestart Cattani\surnameend} \&
  \bibinfo{author}{Roberto \surnamestart Segala\surnameend}
  (\bibinfo{year}{2002}): \emph{\bibinfo{title}{Decision Algorithms for
  Probabilistic Bisimulation}}.
\newblock In: {\sl \bibinfo{booktitle}{{CONCUR}}}, {\sl \bibinfo{series}{LNCS}}
  \bibinfo{volume}{2421}, pp. \bibinfo{pages}{371--385}.
\newblock \urlprefix\url{http://dx.doi.org/10.1007/3-540-45694-5_25}.

\bibitemdeclare{article}{Chand:1970:ACP}
\bibitem{Chand:1970:ACP}
\bibinfo{author}{Donald~R. \surnamestart Chand\surnameend} \&
  \bibinfo{author}{Sham~S. \surnamestart Kapur\surnameend}
  (\bibinfo{year}{1970}): \emph{\bibinfo{title}{An Algorithm for Convex
  Polytopes}}.
\newblock {\sl \bibinfo{journal}{J. ACM}}
  \bibinfo{volume}{17}(\bibinfo{number}{1}), pp. \bibinfo{pages}{78--86}.
\newblock \urlprefix\url{http://dx.doi.org/10.1145/321556.321564}.

\bibitemdeclare{inproceedings}{DBLP:conf/csl/ChatterjeeCK12}
\bibitem{DBLP:conf/csl/ChatterjeeCK12}
\bibinfo{author}{Krishnendu \surnamestart Chatterjee\surnameend},
  \bibinfo{author}{Siddhesh \surnamestart Chaubal\surnameend} \&
  \bibinfo{author}{Pritish \surnamestart Kamath\surnameend}
  (\bibinfo{year}{2012}): \emph{\bibinfo{title}{Faster Algorithms for
  Alternating Refinement Relations}}.
\newblock In: {\sl \bibinfo{booktitle}{CSL}}, {\sl
  \bibinfo{series}{LIPIcs}}~\bibinfo{volume}{16}, \bibinfo{publisher}{Schloss
  Dagstuhl - Leibniz-Zentrum fuer Informatik}, pp. \bibinfo{pages}{167--182}.
\newblock \urlprefix\url{http://dx.doi.org/10.4230/LIPIcs.CSL.2012.167}.

\bibitemdeclare{inproceedings}{DBLP:conf/fossacs/ChatterjeeSH08}
\bibitem{DBLP:conf/fossacs/ChatterjeeSH08}
\bibinfo{author}{Krishnendu \surnamestart Chatterjee\surnameend},
  \bibinfo{author}{Koushik \surnamestart Sen\surnameend} \&
  \bibinfo{author}{Thomas~A. \surnamestart Henzinger\surnameend}
  (\bibinfo{year}{2008}): \emph{\bibinfo{title}{Model-Checking omega-Regular
  Properties of Interval Markov Chains}}.
\newblock In: {\sl \bibinfo{booktitle}{FoSSaCS}}, {\sl \bibinfo{series}{LNCS}}
  \bibinfo{volume}{4962}, \bibinfo{publisher}{Springer}, pp.
  \bibinfo{pages}{302--317}.
\newblock \urlprefix\url{http://dx.doi.org/10.1007/978-3-540-78499-9_22}.

\bibitemdeclare{article}{DBLP:journals/ipl/ChenHK13}
\bibitem{DBLP:journals/ipl/ChenHK13}
\bibinfo{author}{Taolue \surnamestart Chen\surnameend},
  \bibinfo{author}{Tingting \surnamestart Han\surnameend} \&
  \bibinfo{author}{Marta~Z. \surnamestart Kwiatkowska\surnameend}
  (\bibinfo{year}{2013}): \emph{\bibinfo{title}{On the complexity of model
  checking interval-valued discrete time Markov chains}}.
\newblock {\sl \bibinfo{journal}{Inf. Process. Lett.}}
  \bibinfo{volume}{113}(\bibinfo{number}{7}), pp. \bibinfo{pages}{210--216}.
\newblock \urlprefix\url{http://dx.doi.org/10.1016/j.ipl.2013.01.004}.

\bibitemdeclare{inproceedings}{DBLP:conf/vmcai/DelahayeKLLPSW11}
\bibitem{DBLP:conf/vmcai/DelahayeKLLPSW11}
\bibinfo{author}{Beno\^{\i}t \surnamestart Delahaye\surnameend},
  \bibinfo{author}{Joost-Pieter \surnamestart Katoen\surnameend},
  \bibinfo{author}{Kim~G. \surnamestart Larsen\surnameend},
  \bibinfo{author}{Axel \surnamestart Legay\surnameend},
  \bibinfo{author}{Mikkel~L. \surnamestart Pedersen\surnameend},
  \bibinfo{author}{Falak \surnamestart Sher\surnameend} \&
  \bibinfo{author}{Andrzej \surnamestart Wasowski\surnameend}
  (\bibinfo{year}{2011}): \emph{\bibinfo{title}{Abstract Probabilistic
  Automata}}.
\newblock In: {\sl \bibinfo{booktitle}{VMCAI}}, {\sl \bibinfo{series}{LNCS}}
  \bibinfo{volume}{6538}, \bibinfo{publisher}{Springer}, pp.
  \bibinfo{pages}{324--339}.
\newblock \urlprefix\url{http://dx.doi.org/10.1007/978-3-642-18275-4_23}.

\bibitemdeclare{inproceedings}{DBLP:conf/acsd/DelahayeKLLPSW11}
\bibitem{DBLP:conf/acsd/DelahayeKLLPSW11}
\bibinfo{author}{Beno\^{\i}t \surnamestart Delahaye\surnameend},
  \bibinfo{author}{Joost-Pieter \surnamestart Katoen\surnameend},
  \bibinfo{author}{Kim~G. \surnamestart Larsen\surnameend},
  \bibinfo{author}{Axel \surnamestart Legay\surnameend},
  \bibinfo{author}{Mikkel~L. \surnamestart Pedersen\surnameend},
  \bibinfo{author}{Falak \surnamestart Sher\surnameend} \&
  \bibinfo{author}{Andrzej \surnamestart Wasowski\surnameend}
  (\bibinfo{year}{2011}): \emph{\bibinfo{title}{New Results on Abstract
  Probabilistic Automata}}.
\newblock In: {\sl \bibinfo{booktitle}{ACSD}}, \bibinfo{publisher}{IEEE}, pp.
  \bibinfo{pages}{118--127}.
\newblock
  \urlprefix\url{http://doi.ieeecomputersociety.org/10.1109/ACSD.2011.10}.

\bibitemdeclare{inproceedings}{DBLP:conf/lata/DelahayeLLPW11}
\bibitem{DBLP:conf/lata/DelahayeLLPW11}
\bibinfo{author}{Beno\^{\i}t \surnamestart Delahaye\surnameend},
  \bibinfo{author}{Kim~G. \surnamestart Larsen\surnameend},
  \bibinfo{author}{Axel \surnamestart Legay\surnameend},
  \bibinfo{author}{Mikkel~L. \surnamestart Pedersen\surnameend} \&
  \bibinfo{author}{Andrzej \surnamestart Wasowski\surnameend}
  (\bibinfo{year}{2011}): \emph{\bibinfo{title}{Decision Problems for Interval
  Markov Chains}}.
\newblock In: {\sl \bibinfo{booktitle}{LATA}}, {\sl \bibinfo{series}{LNCS}}
  \bibinfo{volume}{6638}, \bibinfo{publisher}{Springer}, pp.
  \bibinfo{pages}{274--285}.
\newblock \urlprefix\url{http://dx.doi.org/10.1007/978-3-642-21254-3_21}.

\bibitemdeclare{inproceedings}{DBLP:conf/spin/FecherLW06}
\bibitem{DBLP:conf/spin/FecherLW06}
\bibinfo{author}{Harald \surnamestart Fecher\surnameend},
  \bibinfo{author}{Martin \surnamestart Leucker\surnameend} \&
  \bibinfo{author}{Verena \surnamestart Wolf\surnameend}
  (\bibinfo{year}{2006}): \emph{\bibinfo{title}{Don't Know in Probabilistic
  Systems}}.
\newblock In: {\sl \bibinfo{booktitle}{SPIN}}, {\sl \bibinfo{series}{LNCS}}
  \bibinfo{volume}{3925}, \bibinfo{publisher}{Springer}, pp.
  \bibinfo{pages}{71--88}.
\newblock \urlprefix\url{http://dx.doi.org/10.1007/11691617_5}.

\bibitemdeclare{article}{DBLP:journals/ai/GivanLD00}
\bibitem{DBLP:journals/ai/GivanLD00}
\bibinfo{author}{Robert \surnamestart Givan\surnameend},
  \bibinfo{author}{Sonia~M. \surnamestart Leach\surnameend} \&
  \bibinfo{author}{Thomas~L. \surnamestart Dean\surnameend}
  (\bibinfo{year}{2000}): \emph{\bibinfo{title}{Bounded-parameter Markov
  decision processes}}.
\newblock {\sl \bibinfo{journal}{Artif. Intell.}}
  \bibinfo{volume}{122}(\bibinfo{number}{1-2}), pp. \bibinfo{pages}{71--109}.
\newblock \urlprefix\url{http://dx.doi.org/10.1016/S0004-3702(00)00047-3}.

\bibitemdeclare{inproceedings}{DBLP:conf/nfm/HahnHZ11}
\bibitem{DBLP:conf/nfm/HahnHZ11}
\bibinfo{author}{Ernst~Moritz \surnamestart Hahn\surnameend},
  \bibinfo{author}{Tingting \surnamestart Han\surnameend} \&
  \bibinfo{author}{Lijun \surnamestart Zhang\surnameend}
  (\bibinfo{year}{2011}): \emph{\bibinfo{title}{Synthesis for PCTL in
  Parametric Markov Decision Processes}}.
\newblock In: {\sl \bibinfo{booktitle}{NASA Formal Methods}}, {\sl
  \bibinfo{series}{LNCS}} \bibinfo{volume}{6617},
  \bibinfo{publisher}{Springer}, pp. \bibinfo{pages}{146--161}.
\newblock \urlprefix\url{http://dx.doi.org/10.1007/978-3-642-20398-5_12}.

\bibitemdeclare{article}{DBLP:journals/fac/HanssonJ94}
\bibitem{DBLP:journals/fac/HanssonJ94}
\bibinfo{author}{Hans \surnamestart Hansson\surnameend} \&
  \bibinfo{author}{Bengt \surnamestart Jonsson\surnameend}
  (\bibinfo{year}{1994}): \emph{\bibinfo{title}{A Logic for Reasoning about
  Time and Reliability}}.
\newblock {\sl \bibinfo{journal}{Formal Asp. Comput.}}
  \bibinfo{volume}{6}(\bibinfo{number}{5}), pp. \bibinfo{pages}{512--535}.
\newblock \urlprefix\url{http://dx.doi.org/10.1007/BF01211866}.

\bibitemdeclare{techreport}{HHK14}
\bibitem{HHK14}
\bibinfo{author}{Vahid \surnamestart Hashemi\surnameend},
  \bibinfo{author}{Hassan \surnamestart Hatefi\surnameend} \&
  \bibinfo{author}{Jan \surnamestart Kr\v{c}\'{a}l\surnameend}
  (\bibinfo{year}{2014}): \emph{\bibinfo{title}{Probabilistic Bisimulations for
  PCTL Model Checking of Interval MDPs}}.
\newblock \bibinfo{type}{AVACS Technical Report No.} \bibinfo{number}{97},
  \bibinfo{institution}{SFB/TR 14 AVACS}.
\newblock \bibinfo{note}{ISSN: 1860-9821, http://www.avacs.org.}

\bibitemdeclare{article}{DBLP:journals/mor/Iyengar05}
\bibitem{DBLP:journals/mor/Iyengar05}
\bibinfo{author}{Garud~N. \surnamestart Iyengar\surnameend}
  (\bibinfo{year}{2005}): \emph{\bibinfo{title}{Robust Dynamic Programming}}.
\newblock {\sl \bibinfo{journal}{Math. Oper. Res.}}
  \bibinfo{volume}{30}(\bibinfo{number}{2}), pp. \bibinfo{pages}{257--280}.
\newblock \urlprefix\url{http://dx.doi.org/10.1287/moor.1040.0129}.

\bibitemdeclare{inproceedings}{DBLP:conf/lics/JonssonL91}
\bibitem{DBLP:conf/lics/JonssonL91}
\bibinfo{author}{Bengt \surnamestart Jonsson\surnameend} \&
  \bibinfo{author}{Kim~Guldstrand \surnamestart Larsen\surnameend}
  (\bibinfo{year}{1991}): \emph{\bibinfo{title}{Specification and Refinement of
  Probabilistic Processes}}.
\newblock In: {\sl \bibinfo{booktitle}{LICS}}, \bibinfo{publisher}{IEEE
  Computer Society}, pp. \bibinfo{pages}{266--277}.
\newblock \urlprefix\url{http://dx.doi.org/10.1109/LICS.1991.151651}.

\bibitemdeclare{article}{KS90}
\bibitem{KS90}
\bibinfo{author}{Paris~C. \surnamestart Kanellakis\surnameend} \&
  \bibinfo{author}{Scott~A. \surnamestart Smolka\surnameend}
  (\bibinfo{year}{1990}): \emph{\bibinfo{title}{{CCS} {E}xpressions, {F}inite
  {S}tate {P}rocesses, and {T}hree {P}roblems of {E}quivalence}}.
\newblock {\sl \bibinfo{journal}{Inf. Comput.}}
  \bibinfo{volume}{86}(\bibinfo{number}{1}), pp. \bibinfo{pages}{43--68}.
\newblock \urlprefix\url{http://dx.doi.org/10.1016/0890-5401(90)90025-D}.

\bibitemdeclare{inproceedings}{DBLP:conf/formats/KatoenKN09}
\bibitem{DBLP:conf/formats/KatoenKN09}
\bibinfo{author}{Joost-Pieter \surnamestart Katoen\surnameend},
  \bibinfo{author}{Daniel \surnamestart Klink\surnameend} \&
  \bibinfo{author}{Martin~R. \surnamestart Neuh{\"a}u{\ss}er\surnameend}
  (\bibinfo{year}{2009}): \emph{\bibinfo{title}{Compositional Abstraction for
  Stochastic Systems}}.
\newblock In: {\sl \bibinfo{booktitle}{FORMATS}}, {\sl \bibinfo{series}{LNCS}}
  \bibinfo{volume}{5813}, \bibinfo{publisher}{Springer}, pp.
  \bibinfo{pages}{195--211}.
\newblock \urlprefix\url{http://dx.doi.org/10.1007/978-3-642-04368-0_16}.

\bibitemdeclare{inproceedings}{DBLP:conf/acsd/KatoenPST10}
\bibitem{DBLP:conf/acsd/KatoenPST10}
\bibinfo{author}{Joost-Pieter \surnamestart Katoen\surnameend},
  \bibinfo{author}{Jaco \surnamestart van~de Pol\surnameend},
  \bibinfo{author}{Mari{\"e}lle \surnamestart Stoelinga\surnameend} \&
  \bibinfo{author}{Mark \surnamestart Timmer\surnameend}
  (\bibinfo{year}{2010}): \emph{\bibinfo{title}{A Linear Process-Algebraic
  Format for Probabilistic Systems with Data}}.
\newblock In: {\sl \bibinfo{booktitle}{ACSD}}, \bibinfo{publisher}{IEEE
  Computer Society}, pp. \bibinfo{pages}{213--222}.
\newblock
  \urlprefix\url{http://doi.ieeecomputersociety.org/10.1109/ACSD.2010.18}.

\bibitemdeclare{article}{DBLP:journals/rc/KozineU02}
\bibitem{DBLP:journals/rc/KozineU02}
\bibinfo{author}{Igor \surnamestart Kozine\surnameend} \&
  \bibinfo{author}{Lev~V. \surnamestart Utkin\surnameend}
  (\bibinfo{year}{2002}): \emph{\bibinfo{title}{Interval-Valued Finite Markov
  Chains}}.
\newblock {\sl \bibinfo{journal}{Reliable Computing}}
  \bibinfo{volume}{8}(\bibinfo{number}{2}), pp. \bibinfo{pages}{97--113}.
\newblock \urlprefix\url{http://dx.doi.org/10.1023/A:1014745904458}.

\bibitemdeclare{article}{DBLP:journals/iandc/LarsenS91}
\bibitem{DBLP:journals/iandc/LarsenS91}
\bibinfo{author}{Kim~Guldstrand \surnamestart Larsen\surnameend} \&
  \bibinfo{author}{Arne \surnamestart Skou\surnameend} (\bibinfo{year}{1991}):
  \emph{\bibinfo{title}{Bisimulation through Probabilistic Testing}}.
\newblock {\sl \bibinfo{journal}{Inf. Comput.}}
  \bibinfo{volume}{94}(\bibinfo{number}{1}), pp. \bibinfo{pages}{1--28}.
\newblock \urlprefix\url{http://dx.doi.org/10.1016/0890-5401(91)90030-6}.

\bibitemdeclare{book}{DBLP:books/daglib/0067019}
\bibitem{DBLP:books/daglib/0067019}
\bibinfo{author}{Robin \surnamestart Milner\surnameend} (\bibinfo{year}{1989}):
  \emph{\bibinfo{title}{Communication and concurrency}}.
\newblock \bibinfo{series}{PHI Series in computer science},
  \bibinfo{publisher}{Prentice Hall}.

\bibitemdeclare{article}{DBLP:journals/ior/NilimG05}
\bibitem{DBLP:journals/ior/NilimG05}
\bibinfo{author}{Arnab \surnamestart Nilim\surnameend} \&
  \bibinfo{author}{Laurent~El \surnamestart Ghaoui\surnameend}
  (\bibinfo{year}{2005}): \emph{\bibinfo{title}{Robust Control of Markov
  Decision Processes with Uncertain Transition Matrices}}.
\newblock {\sl \bibinfo{journal}{Operations Research}}
  \bibinfo{volume}{53}(\bibinfo{number}{5}), pp. \bibinfo{pages}{780--798}.
\newblock \urlprefix\url{http://dx.doi.org/10.1287/opre.1050.0216}.

\bibitemdeclare{article}{PT87}
\bibitem{PT87}
\bibinfo{author}{Robert \surnamestart Paige\surnameend} \&
  \bibinfo{author}{Robert~E. \surnamestart Tarjan\surnameend}
  (\bibinfo{year}{1987}): \emph{\bibinfo{title}{Three Partition Refinement
  Algorithms}}.
\newblock {\sl \bibinfo{journal}{{SIAM} J. on Computing}}
  \bibinfo{volume}{16}(\bibinfo{number}{6}), pp. \bibinfo{pages}{973--989}.
\newblock \urlprefix\url{http://dx.doi.org/10.1137/0216062}.

\bibitemdeclare{inproceedings}{DBLP:conf/focs/Pnueli77}
\bibitem{DBLP:conf/focs/Pnueli77}
\bibinfo{author}{Amir \surnamestart Pnueli\surnameend} (\bibinfo{year}{1977}):
  \emph{\bibinfo{title}{The Temporal Logic of Programs}}.
\newblock In: {\sl \bibinfo{booktitle}{FOCS}}, \bibinfo{publisher}{IEEE
  Computer Society}, pp. \bibinfo{pages}{46--57}.
\newblock
  \urlprefix\url{http://doi.ieeecomputersociety.org/10.1109/SFCS.1977.32}.

\bibitemdeclare{inproceedings}{DBLP:conf/cav/PuggelliLSS13}
\bibitem{DBLP:conf/cav/PuggelliLSS13}
\bibinfo{author}{Alberto \surnamestart Puggelli\surnameend},
  \bibinfo{author}{Wenchao \surnamestart Li\surnameend},
  \bibinfo{author}{Alberto~L. \surnamestart Sangiovanni-Vincentelli\surnameend}
  \& \bibinfo{author}{Sanjit~A. \surnamestart Seshia\surnameend}
  (\bibinfo{year}{2013}): \emph{\bibinfo{title}{Polynomial-Time Verification of
  PCTL Properties of MDPs with Convex Uncertainties}}.
\newblock In: {\sl \bibinfo{booktitle}{CAV}}, {\sl \bibinfo{series}{LNCS}}
  \bibinfo{volume}{8044}, \bibinfo{publisher}{Springer}, pp.
  \bibinfo{pages}{527--542}.
\newblock \urlprefix\url{http://dx.doi.org/10.1007/978-3-642-39799-8_35}.

\bibitemdeclare{book}{Puterman:1994:MDP:528623}
\bibitem{Puterman:1994:MDP:528623}
\bibinfo{author}{Martin~L. \surnamestart Puterman\surnameend}
  (\bibinfo{year}{1994}): \emph{\bibinfo{title}{Markov Decision Processes:
  Discrete Stochastic Dynamic Programming}}, \bibinfo{edition}{1st} edition.
\newblock \bibinfo{publisher}{John Wiley \& Sons, Inc.}, \bibinfo{address}{New
  York, NY, USA}.
\newblock \urlprefix\url{http://dx.doi.org/10.1002/9780470316887}.

\bibitemdeclare{phdthesis}{mringwal:phdthesis:2009}
\bibitem{mringwal:phdthesis:2009}
\bibinfo{author}{Matthias \surnamestart Ringwald\surnameend}
  (\bibinfo{year}{2009}): \emph{\bibinfo{title}{Reducing Uncertainty in
  Wireless Sensor Networks - Network Inspection and Collision-Free Medium
  Access}}.
\newblock Ph.D. thesis, \bibinfo{school}{ETH Zurich}, \bibinfo{address}{Zurich,
  Switzerland}.

\bibitemdeclare{phdthesis}{Segala-thesis}
\bibitem{Segala-thesis}
\bibinfo{author}{Roberto \surnamestart Segala\surnameend}
  (\bibinfo{year}{1995}): \emph{\bibinfo{title}{Modeling and Verification of
  Randomized Distributed Real-Time Systems}}.
\newblock Ph.D. thesis, \bibinfo{school}{Laboratory for Computer Science,
  Massachusetts Institute of Technology}.

\bibitemdeclare{inproceedings}{SL94}
\bibitem{SL94}
\bibinfo{author}{Roberto \surnamestart Segala\surnameend} \&
  \bibinfo{author}{Nancy~A. \surnamestart Lynch\surnameend}
  (\bibinfo{year}{1994}): \emph{\bibinfo{title}{Probabilistic Simulations for
  Probabilistic Processes}}.
\newblock In: {\sl \bibinfo{booktitle}{{CONCUR}}}, {\sl \bibinfo{series}{LNCS}}
  \bibinfo{volume}{836}, pp. \bibinfo{pages}{481--496}.
\newblock \urlprefix\url{http://dx.doi.org/10.1007/BFb0015027}.

\bibitemdeclare{inproceedings}{DBLP:conf/tacas/SenVA06}
\bibitem{DBLP:conf/tacas/SenVA06}
\bibinfo{author}{Koushik \surnamestart Sen\surnameend}, \bibinfo{author}{Mahesh
  \surnamestart Viswanathan\surnameend} \& \bibinfo{author}{Gul \surnamestart
  Agha\surnameend} (\bibinfo{year}{2006}): \emph{\bibinfo{title}{Model-Checking
  Markov Chains in the Presence of Uncertainties}}.
\newblock In: {\sl \bibinfo{booktitle}{TACAS}}, {\sl \bibinfo{series}{LNCS}}
  \bibinfo{volume}{3920}, \bibinfo{publisher}{Springer}, pp.
  \bibinfo{pages}{394--410}.
\newblock \urlprefix\url{http://dx.doi.org/10.1007/11691372_26}.

\bibitemdeclare{article}{Subramani09}
\bibitem{Subramani09}
\bibinfo{author}{K.~\surnamestart Subramani\surnameend} (\bibinfo{year}{2009}):
  \emph{\bibinfo{title}{On the Complexities of Selected Satisfiability and
  Equivalence Queries over Boolean Formulas and Inclusion Queries over Hulls}}.
\newblock {\sl \bibinfo{journal}{JAMDS}} \bibinfo{volume}{2009}.
\newblock \urlprefix\url{http://dx.doi.org/10.1155/2009/845804}.

\bibitemdeclare{inproceedings}{DBLP:conf/qest/Timmer11}
\bibitem{DBLP:conf/qest/Timmer11}
\bibinfo{author}{Mark \surnamestart Timmer\surnameend} (\bibinfo{year}{2011}):
  \emph{\bibinfo{title}{SCOOP: A Tool for SymboliC Optimisations of
  Probabilistic Processes}}.
\newblock In: {\sl \bibinfo{booktitle}{QEST}}, \bibinfo{publisher}{IEEE
  Computer Society}, pp. \bibinfo{pages}{149--150}.
\newblock
  \urlprefix\url{http://doi.ieeecomputersociety.org/10.1109/QEST.2011.27}.

\bibitemdeclare{inproceedings}{DBLP:conf/cdc/WolffTM12}
\bibitem{DBLP:conf/cdc/WolffTM12}
\bibinfo{author}{Eric~M. \surnamestart Wolff\surnameend}, \bibinfo{author}{Ufuk
  \surnamestart Topcu\surnameend} \& \bibinfo{author}{Richard~M. \surnamestart
  Murray\surnameend} (\bibinfo{year}{2012}): \emph{\bibinfo{title}{Robust
  control of uncertain Markov Decision Processes with temporal logic
  specifications}}.
\newblock In: {\sl \bibinfo{booktitle}{CDC}}, \bibinfo{publisher}{IEEE}, pp.
  \bibinfo{pages}{3372--3379}.
\newblock \urlprefix\url{http://dx.doi.org/10.1109/CDC.2012.6426174}.

\bibitemdeclare{article}{DBLP:journals/ai/WuK08}
\bibitem{DBLP:journals/ai/WuK08}
\bibinfo{author}{Di~\surnamestart Wu\surnameend} \& \bibinfo{author}{Xenofon~D.
  \surnamestart Koutsoukos\surnameend} (\bibinfo{year}{2008}):
  \emph{\bibinfo{title}{Reachability analysis of uncertain systems using
  bounded-parameter Markov decision processes}}.
\newblock {\sl \bibinfo{journal}{Artif. Intell.}}
  \bibinfo{volume}{172}(\bibinfo{number}{8-9}), pp. \bibinfo{pages}{945--954}.
\newblock \urlprefix\url{http://dx.doi.org/10.1016/j.artint.2007.12.002}.

\bibitemdeclare{inproceedings}{yi1994reasoning}
\bibitem{yi1994reasoning}
\bibinfo{author}{Wang \surnamestart Yi\surnameend} (\bibinfo{year}{1994}):
  \emph{\bibinfo{title}{Algebraic Reasoning for Real-Time Probabilistic
  Processes with Uncertain Information}}.
\newblock In: {\sl \bibinfo{booktitle}{FTRTFT}}, {\sl \bibinfo{series}{LNCS}}
  \bibinfo{volume}{863}, \bibinfo{publisher}{Springer}, pp.
  \bibinfo{pages}{680--693}.
\newblock \urlprefix\url{http://dx.doi.org/10.1007/3-540-58468-4_190}.

\end{thebibliography}
\newpage
\end{document}